\newcommand{\longversion}[1]{#1}
\newcommand{\shortversion}[1]{}
\newenvironment{proof}{\noindent{\em Proof:}}{ \hfill $\square$\\ }
\setlist[itemize]{noitemsep, topsep=2pt}
\DeclareMathOperator*{\argmax}{arg\,max}
\newdimen\prevdp
\def\leftlabel#1{\noalign{\prevdp=\prevdepth
   \kern-\prevdp\nointerlineskip\vbox to0pt{\vss\hbox{\ensuremath{#1}}}\kern\prevdp}}
\newcommand{\NP}{\ensuremath{\mathsf{NP}}\xspace}
\newcommand{\NPC}{\ensuremath{\mathsf{NP}}-complete\xspace}
\newcommand{\NPH}{\ensuremath{\mathsf{NP}}-hard\xspace}
\newcommand{\coNPH}{\ensuremath{\textsf{co-NP-hard}}\xspace}
\newcommand{\el}{\ensuremath{\ell}\xspace}
\newcommand{\suc}{\ensuremath{\succ}\xspace}
\newcommand{\Pb}{\ensuremath{P}\xspace}
\newcommand{\YES}{{\sc yes}\xspace}
\newcommand{\NO}{{\sc no}\xspace}
\newcommand{\PS}{{\sc Permutation Sum}\xspace}
\newcommand{\SM}{{\sc Distance Restricted Strong Manipulation}\xspace}
\newcommand{\DRSM}{{\sc DRSM}\xspace}
\newcommand{\DRWM}{{\sc DRWM}\xspace}
\newcommand{\LB}{{\sc Local Distance Constrained Bribery} \xspace}
\newcommand{\PM}{{\sc Distance Restricted Weak Manipulation}\xspace}
\newcommand{\MAN}{{\sc Manipulation}\xspace}
\newcommand{\XTC}{{\sc X3C}\xspace}
\newcommand{\XTCC}{$\overline{\text{\sc X3C}}$\xspace}
\renewcommand{\AA}{\ensuremath{\mathcal A}\xspace}
\newcommand{\BB}{\ensuremath{\mathcal B}\xspace}
\newcommand{\CC}{\ensuremath{\mathcal C}\xspace}
\newcommand{\DD}{\ensuremath{\mathcal D}\xspace}
\newcommand{\EE}{\ensuremath{\mathcal E}\xspace}
\newcommand{\GG}{\ensuremath{\mathcal G}\xspace}
\newcommand{\LL}{\ensuremath{\mathcal L}\xspace}
\newcommand{\MM}{\ensuremath{\mathcal M}\xspace}
\newcommand{\OO}{\ensuremath{\mathcal O}\xspace}
\newcommand{\PP}{\ensuremath{\mathcal P}\xspace}
\newcommand{\QQ}{\ensuremath{\mathcal Q}\xspace}
\newcommand{\RR}{\ensuremath{\mathcal R}\xspace}
\renewcommand{\SS}{\ensuremath{\mathcal S}\xspace}
\newcommand{\TT}{\ensuremath{\mathcal T}\xspace}
\newcommand{\UU}{\ensuremath{\mathcal U}\xspace}
\newcommand{\VV}{\ensuremath{\mathcal V}\xspace}
\newcommand{\WW}{\ensuremath{\mathcal W}\xspace}
\newcommand{\XX}{\ensuremath{\mathcal X}\xspace}
\newcommand{\NB}{\ensuremath{\mathbb N}\xspace}
\newtheorem{observation}{\bf Observation}
\newtheorem{theorem}{\bf Theorem}
\newtheorem{lemma}{\bf Lemma}
\newtheorem{definition}{\bf Definition}
\newcommand{\eps}{\ensuremath{\varepsilon}\xspace}
\renewcommand{\epsilon}{\eps}
\newcommand{\ignore}[1]{}
\newcommand{\pr}{\ensuremath{\prime}}
\renewcommand{\leq}{\leqslant}
\renewcommand{\geq}{\geqslant}
\renewcommand{\ge}{\geqslant}
\renewcommand{\le}{\leqslant}
\crefname{theorem}{theorem}{\bf Theorem}
\crefname{observation}{observation}{\bf Observation}
\crefname{lemma}{lemma}{\bf Lemma}
\crefname{corollary}{corollary}{\bf Corollary}
\crefname{proposition}{proposition}{\bf Proposition}
\crefname{definition}{definition}{\bf Definition}
\crefname{claim}{claim}{\bf Claim}
\crefname{reductionrule}{reduction rule}{\bf Reduction rule}
\title{Distance Restricted Manipulation in Voting}
\author{Aditya Anand and Palash Dey\\\texttt{adityaanand1998g@iitkgp.ac.in,palash.dey@cse.iitkgp.ac.in}\\ Indian Institute of Technology Kharagpur}
\begin{document}
%\linenumbers

\maketitle
	
\begin{abstract}
	We introduce the notion of {\em Distance Restricted Manipulation}, where colluding manipulator(s) need to compute if there exist votes which make their preferred alternative win the election when their knowledge about the others' votes is a little inaccurate. We use the Kendall-Tau distance to model the manipulators' confidence in the non-manipulators' votes. To this end, we study this problem in two settings - one where the manipulators need to compute a manipulating vote that succeeds irrespective of perturbations in others' votes ({\em Distance Restricted Strong Manipulation}), and the second where the manipulators need to compute a manipulating vote that succeeds for at least one possible vote profile of the others ({\em Distance Restricted Weak Manipulation}). We show that {\em Distance Restricted Strong Manipulation} admits polynomial-time algorithms for every scoring rule, maximin, Bucklin, and simplified Bucklin voting rules for a single manipulator, and for the $k$-approval rule for any number of manipulators, but becomes intractable for the Copeland$^\alpha$ voting rule for every $\alpha\in[0,1]$ even for a single manipulator. In contrast, {\em Distance Restricted Weak Manipulation} is intractable for almost all the common voting rules, with the exception of the plurality rule. For a constant number of alternatives, we show that both the problems are polynomial-time solvable for every anonymous and efficient voting rule.
	 % \longversion{Finally, we empirically show that the probability that a uniformly random profile is strongly manipulable decreases drastically even if the manipulator possesses a little uncertainty about the others' votes. Hence our results show that the well studied single manipulation problem remains polynomial time solvable for scoring rules, maximin, Bucklin, and simplified Bucklin voting rules even if the manipulator is not too sure about the votes of the non-manipulators. However, a little uncertainty about the non-manipulators' votes makes manipulation by a single voter intractable for the Copeland$^\alpha$ voting rule for every $\alpha\in[0,1]$.}
\end{abstract}
% \keywords{Manipulation \and Incomplete Information \and Computational Social Choice \and Voting \and Algorithms}

\begin{table*}[]
\centering

\resizebox{\textwidth}{!}{\begin{tabular}{|c|c|c|}\hline
 	
 Voting Rule & \SM & \PM\\\hline\hline

 Plurality & \makecell{$\mathbf{\Pb}$ for any number of manipulators   [\Cref{thm:kapp}]} & \makecell{$\mathbf{\Pb}$ for any number of manipulators[\Cref{thm:plurality_pm}]} \\\hline

 $k$-approval, $k \geq 2$ & \makecell{$\mathbf{\Pb}$ for any number of manipulators   [\Cref{thm:kapp}]} & \makecell{$\textbf{\NPC}$, even with $\delta = 2$ for each vote, \\for single manipulator[\Cref{thm:kapp_pm}]} \\\hline
 
 Scoring rules & \makecell{$\mathbf{\Pb}$ for single manipulator  [\Cref{thm:sc}]} &  \makecell{$\textbf{\NPC}$ for \textbf{Borda}, even with $\delta = 1$ \\ for each vote, for single manipulator[\Cref{thm:borda_pm}]} \\\hline
 
 Maximin & \makecell{$\mathbf{\Pb}$ for single manipulator [\Cref{thm:maximin}]\\\NPH for $\ge\!2$ manipulators\\\relax [\Cref{obs}, Faliszewski et al.~\cite{FaliszewskiHS08}~\cite{FaliszewskiHS10}]} &  \makecell{$\textbf{\NPC}$, even with $\delta = 3$ for each vote, \\for single manipulator[\Cref{thm:maximin_pm}]}\\\hline
 
 Copeland$^\alpha$ & \makecell{$\textbf{co-NP-Hard}$ even for\\ single manipulator [\Cref{thm:copeland}]} & \makecell{$\textbf{\NPC}$, even with $\delta = 3$ \\for each vote, \\ for single manipulator[\Cref{thm:copeland_pm}]} \\\hline
 
 Bucklin & \makecell{$\mathbf{\Pb}$ for single manipulator [\Cref{thm:buck}]} & \makecell{$\textbf{\NPC}$, even with $\delta = 1$ for each vote, \\for single manipulator[\Cref{thm:bucklin_pm}]} \\\hline 
 
 Simplified Bucklin & \makecell{$\mathbf{\Pb}$ for single manipulator  [\Cref{thm:sim_buck}]} & \makecell{$\textbf{\NPC}$, even with $\delta = 1$ for each vote, \\for single manipulator[\Cref{thm:sb_pm}]} \\\hline
 %Plurality with run off & \\\hline
 \end{tabular}
 }
 \caption{Summary of results for \SM and \PM. Our algorithms work even for the case when manipulators have different $\delta$ value for different voters. Our hardness results hold even when the manipulators have the same $\delta$ value for every voter. Results in bold are proved in this paper.}\label{tbl:summary}
\end{table*}

\section{Introduction}

Voting has served as a fundamental tool for aggregating preferences of a set of people over a set of alternatives for centuries. A typical voting system consists of a set of alternatives, a set of voters each having a linear order over the set of alternatives as her preference, and a voting rule which selects a set of alternatives as winners depending on the voters' preferences. However, classical results show that every reasonable voting system with at least $3$ alternatives can suffer from manipulation~\cite{gibbard1973manipulation,satterthwaite1975strategy} --- an agent may be able to make her more favored alternative win by misreporting her preference. Bartholdi et al. pioneered the idea of using computational intractability as a barrier to safeguard elections against manipulation~\cite{bartholdi1989computational,bartholdi1991single}. Indeed, if we have $m$ alternatives and even if the manipulator exactly knows the preferences of all other voters, na\"{i}vely going over all $(m!-1)$ possible preferences and reporting the one that results in the best outcome for the manipulator is not feasible for any computationally bounded manipulator.

Although the idea of Bartholdi et al. was to use computational intractability as a barrier against manipulation, the computational problem of manipulation admits efficient algorithm for most of the commonly used voting rules such as the scoring rules, maximin, Copeland, etc. with the prominent exception of the single transferable vote (STV) voting rule. Even for voting rules (STV for example) for which the computational barrier exists against manipulation, it seems that the barrier, in reality, maybe substantially weak due to the existence of heuristics which work well in practice~\cite[and references there in]{FaliszewskiP10,FriedgutKKN11,MosselR15}.

\paragraph*{Motivation:} The computational problem of manipulation has mostly been studied in what is called the complete information setting --- the manipulator knows exactly, the preferences of all other voters. Although this setting may be the best possible to prove intractability results (if one proves that manipulation is intractable even if the manipulator exactly knows the preferences of all other voters, then the manipulator's job can only be harder if she does not know some part of the others' preferences), it is hardly practical. Indeed, most applications of voting in AI - voting over a social network for example, involve a large number of voters where the complete information setting is far from reality. This motivates us to study the classical manipulation problem in an incomplete information setting. In our model, for every voter $v$, the manipulator has a believed preference $\suc_v$ and an integer $\delta_v$ denoting the worst-case Kendall-Tau (number of pairs which are ranked differently) distance by which the true preference of the voter $v$ can deviate from $\suc_v$ ; low (high respectively) value of $\delta_v$ corresponds to the manipulator having high (low respectively) confidence on her belief about voter $v$'s true preference. Indeed, in many real-world election scenarios, the manipulator can form a belief about a voter's preference based on that voter's historical data and other activities. However, due to various activities that may have happened since the last election or simply because of the inherent uncertainty in human nature, the voter's preference may have slightly changed (quantified as $\delta_v$).

\subsection{Contribution}

Our basic problem extends the problem of (coalitional) manipulation - the input is a set \AA of $m$ alternatives, a set \VV of $n$ voters, associated with each of them a believed preference, a distinguished alternative $c$, and a number \el of manipulators. In the \SM (\DRSM) problem, we need to compute if there exists a preference profile for the manipulators which makes $c$ win the election irrespective of any deviation of every other voter $v$ from her believed preference $\suc_v$ by at most $\delta_v$ under Kendall-Tau distance. In the \PM (\DRWM) problem, we need to compute if there exists at least one preference profile for the manipulators which makes $c$ win the election in at least one profile of the non-manipulators, where the deviation of every non-manipulating voter $v$ from her believed preference $\suc_v$ is at most $\delta_v$ under Kendall-Tau distance.

These two settings are motivated as follows: from the manipulators' perspective, \DRSM looks for guaranteed success irrespective of small perturbations, while \DRWM examines the possibility of manipulation of a given profile. Further, if we consider a more general problem in which we ask if manipulators can successfully manipulate a fixed threshold fraction of profiles of non-manipulators which meet the Kendall-Tau distance constraints, hardness results for both \DRWM(which is computationally hard for most common voting rules) and \DRSM transfer to this natural setting.

% \longversion{We prove that the \SM problem is polynomial-time solvable for the $k$-approval voting rule for any number of manipulators~[\Cref{thm:kapp}]. If we have only manipulator (that is $\el=1$), then the \SM problem is polynomial-time solvable for every scoring rule, maximin, Bucklin, and simplified Bucklin voting rules. On the other hand, we show that the \SM problem is \coNPH for the Copeland$^\alpha$ voting rule even with one manipulator~[\Cref{thm:copeland}].} 
We summarize our complexity-theoretic results in \Cref{tbl:summary}. Other than that, we show that, for a constant number of alternatives, both problems are polynomial-time solvable for every anonymous and efficient voting rule for any number of manipulators~[\Cref{thm:poly}].

\subsection{Related Work}

Initiated by Bartholdi et al. the study of the computational complexity of manipulation has been one of the key research areas in computational social choice~\cite{bartholdi1989computational,bartholdi1991single}. Conitzer et al. showed that, for weighted elections, the coalition manipulation problem which is manipulation by a coalition of voters, is \NPC even when we have a small constant number of alternatives for most of the commonly used voting rules~\cite{ConitzerSL07}. Faliszewski and Procaccia exhibited evidence that the computational problem of manipulation may not be computationally challenging on average~\cite{FaliszewskiP10}. Mossel and R{\'{a}}cz and Friedgut et al. showed that, for a uniformly random preference profile, reporting a random preference results in a successful manipulation with high probability ($1$ over some polynomial in the number of voters and the number of alternatives) for any reasonable voting rule~\cite{MosselR15,FriedgutKKN11}. We refer to \cite{ConitzerW16} for an excellent overview of the computational problem of manipulation. Manipulation comes under a more general class of problems known as election control problems. Election control refers to the phenomenon of influencing the outcome of an election through various means. Other than manipulation, prominent examples of election control problems include bribery, voter deletion, alternative deletion, voter partition, alternative partition, etc. We refer to~\cite{FaliszewskiR16} for a comprehensive survey of various kinds of election control problems.

The effect of limiting manipulators' access to other voters' preference profile on the computational complexity of manipulation has been studied before. Conitzer et al. defined the dominating manipulation problem in a bid to model the manipulator's limited information and showed that the commonly used voting rules, except plurality and veto, are resistant to this kind of manipulation~\cite{ConitzerWX11}. Dey et al. captured the manipulator's approach to risk into the concept of weak, strong, and opportunistic manipulation and showed that the weak as well as opportunistic manipulations are intractable for all the commonly used voting rules except plurality and veto whereas the strong manipulation problem admits a polynomial-time algorithm for most of the common voting rules~\cite{DeyMN18}. Both the above papers model the manipulator's limited information as a partial preference profile (a partial preference for each voter); for every other voter, the manipulator knows the ordering of some pairs of alternatives for sure but does not have any clue about the remaining pairs. On the other hand, our model of the manipulator's limited information cannot be modelled as partial preferences over voters; in our model, intuitively speaking, the manipulator's lack of information is distributed over all pairs of alternatives.

 By adopting a quantitative deviation measure, we capture small changes in voters' mindset from manipulators' beliefs, as opposed to the models of partial information which assume that the manipulators know exactly the rankings of certain pairs for each voter. This is especially relevant, say, if the past voting patterns of a voter are available: one would expect that the voter would only change his preference by a small distance. Further, it maybe possible to learn a good bound on this distance for each individual, by observing changes in past voting.

The effect of incomplete information has been studied in other (different from computational) settings. A related line of work studies voting equilibria when voters have access to only partial information and vote strategically. Meir et al.~\cite{MeirLR14} and Meir~\cite{Meir15} proposed a theory of voting equilibria based on local dominance and showed convergence to such equilibria under local-dominance based dynamics for the plurality voting rule. Lev et al. presented a theory of ordinally-dominated strategies and showed its effectiveness to justify known voting heuristics as bounded rational~\cite{LevMOP19}. Reijngoud and Endriss~\cite{ReijngoudE12} and Endriss et al.~\cite{EndrissOPR16} studied the manipulation problem for popular voting rules when the voters have limited access to others' votes through an opinion poll. Further, they studied voter response to iterated poll information in an iterative voting setting. Slinko and White~\cite{SlinkoW14} studied manipulation in the setting with more than one manipulator, each with same true preference, where each manipulator contemplates casting the same strategic vote as the others. In such a situation, mis-coordinated strategic voting by the manipulators may lead to a worse outcome for all of them. They showed that for every onto, non-dictatorial voting rule there are circumstances where a manipulator can cast a strategic vote safely.
\section{Preliminaries}\label{sec:prelim}

Let us denote the set $\{1, 2, \ldots, \el\}$ by $[\el]$ for any positive integer $\el$. Let $\AA$ be a set of $m$ alternatives and $\VV$ a set of $n$ voters. If not mentioned otherwise, we denote the number of alternatives by $m$ and the number of voters by $n$. Every voter $v_i$ has a preference or vote $\suc_i$ which is a complete order over \AA. We denote the set of complete orders over \AA by $\LL(\AA)$. We call a tuple of $n$ preferences $(\suc_1, \suc_2, \cdots, \suc_n)\in\LL(\AA)^n$ an $n$-voter preference profile. An election is defined as a set of alternatives together with a voting profile. Let $\uplus$ denote the disjoint union of sets. A map $r:\uplus_{n,|\AA|\in\mathbb{N}^+}\mathcal{L(\AA)}^n\longrightarrow 2^\mathcal{\AA}\setminus\{\emptyset\}$ is called a \emph{voting rule}. A voting rule $r$ is called {\em efficient} if the winners under $r$ can be computed in polynomial time. A voting rule is called anonymous if the set of winners does not depend on the names of the voters. For a voting rule $r$ and a preference profile $\succ = (\succ_1, \dots, \succ_n)$, we say an alternative $x$ wins uniquely if $r(\succ) = \{x\}$ and $x$ co-wins if $x\in r(\suc)$. For a vote $\suc\in\LL(\AA)$ and two alternatives $x, y\in\AA$, we say that $x$ is placed before $y$ in \suc if $x\suc y$; otherwise we say $x$ is placed after $y$ in~\suc. An alternative is said to be at the $i^{th}$ position from the top/left (bottom/right) if there are exactly $(i-1)$ alternatives before (after) it. For any two alternatives $x, y\in\AA$ with $x\ne y$ in an election $\EE=(\AA,\PP)$, let us define the margin $\DD_\PP(x, y)$ of $x$ from $y$ to be $|\{ i: x \suc_i y \}| - |\{ i: y \suc_i x \}|$. Examples of some common voting rules are as follows.

{\bf Positional scoring rules:} A collection $(\overrightarrow{s_m})_{m\in\NB^+}$ of $m$-dimensional vectors $\overrightarrow{s_m}=\left(\alpha_1,\alpha_2,\dots,\alpha_m\right)\in\mathbb{N}^m$ 
 with $\alpha_1\ge\alpha_2\ge\dots\ge\alpha_m$ and $\alpha_1>\alpha_m$ for every $m\in \mathbb{N}^+$ naturally defines a voting rule --- an alternative gets score $\alpha_i$ from a vote if it is placed at the $i^{th}$ position from the top, and the score of an alternative is the sum of the scores it receives from all the votes. 
 The winners are the alternatives with maximum score. If $\alpha_i$ is $1$ for $i\in [k]$ and $0$ otherwise, then we get the $k$-approval voting rule. If $\alpha_i=m-i$, then we get Borda rule.

\textbf{Copeland$^{\alpha}$:} Given $\alpha\in[0,1]$, the Copeland$^{\alpha}$ score of an alternative $x$ in an election $\EE=(\AA,\PP)$ is $|\{y \in \AA \setminus x:\DD_\PP(x,y)>0\}|+\alpha|\{y \in \AA \setminus x:\DD_\PP(x,y)=0\}|$. The winners are the alternatives with maximum Copeland$^{\alpha}$ score. 
%If not mentioned otherwise, we will assume $\alpha$ to be zero.

{\bf Maximin:} The maximin score of an alternative $x$ in an election $\EE=(\AA,\PP)$ is $\min_{y \in \AA \setminus x} \DD_\PP(x,y)$. The winners are the alternatives with maximum score.

{\bf Bucklin and simplified Bucklin:} Let $\ell$ be the minimum integer such that there exists at least one alternative $x\in\AA$ who more than half of the voters place in their top $\ell$ positions. Then the Bucklin winners are the alternatives who are placed the most number of times within the top $\el$ positions of the votes. The simplified Bucklin winners are the alternatives who appear within the top \el positions in a majority of the preferences.
 
% \item {\bf Single transferable vote (STV):} In every iteration, the alternative with the lowest plurality score (breaking ties using some tie-breaking rule) drops out from the election. The alternative remaining after $m-1$ iterations is the winner.

The {\em Kendall-Tau} distance between a pair of preferences $\suc,\suc^\pr\in\LL(\AA)$, denoted by $d_{KT}(\suc,\suc^\pr)$, is the number of pairs of alternatives where \suc and $\suc^\pr$ differ; that is $d_{KT}(\suc,\suc^\pr)=|\{(a,b)\in\AA\times\AA: a\ne b, a\suc b, b\suc^\pr a\}|$. Alternatively, the Kendall-Tau distance between two preferences is the minimum number of adjacent swaps needed to convert a preference into another. In this draft, by swaps, we mean only adjacent swaps.
%
%\subsection{Problem Definition}
%
We now define our problems formally.

\begin{definition}[\\\DRSM]\label{def:gsb}
 Given a set \AA of $m$ alternatives, an $n$-voter profile $\PP=(\suc_i)_{i\in[n]}\in\LL(\AA)^n$ over \AA, a distinguished alternative $c\in\AA$, a tuple $(\delta_i)_{i\in[n]}$ of non-negative integers, and the number \el (a positive integer) of manipulators, compute if there exists a profile $(\suc_{n+1}^\pr,\ldots,\suc_{n+\el}^\pr)\in\LL(\AA)^\el$ such that, we have $c\in r((\suc_i^\pr)_{i\in[n+\el]})$ for every $(\suc_i^\pr)_{i\in[n]}\in\LL(\AA)^n$ with $d_{KT}(\suc_i,\suc_i^\pr)\le\delta_i$ for every $i\in[n]$.
\end{definition}

\begin{definition}[\\\DRWM]\label{def:gsb}
 Given a set \AA of $m$ alternatives, an $n$-voter profile $\PP=(\suc_i)_{i\in[n]}\in\LL(\AA)^n$ over \AA, a distinguished alternative $c\in\AA$, a tuple $(\delta_i)_{i\in[n]}$ of non-negative integers, and the number \el (a positive integer) of manipulators, compute if there exists a profile $(\suc_{n+1}^\pr,\ldots,\suc_{n+\el}^\pr)\in\LL(\AA)^\el$ such that, we have $c\in r((\suc_i^\pr)_{i\in[n+\el]})$ for some $(\suc_i^\pr)_{i\in[n]}\in\LL(\AA)^n$ with $d_{KT}(\suc_i,\suc_i^\pr)\le\delta_i$ for every $i\in[n]$.
\end{definition}

We denote an arbitrary instance of above problems by $(\AA,\PP,c,(\delta_i)_{i\in[n]},\el)$. The above definition requires $c$ to be a co-winner. One can similarly pose the problem in the unique winner setting. We remark that all our results, both algorithmic and hardness, easily extend to the unique winner setting. For ease of exposition, we restrict ourselves to the co-winner setting only in this short version. For the classical \MAN problem, we have $\delta_i=0$ for every $i\in[n]$. Hence, we have the following observation.

\begin{observation}\label{obs}
	If \MAN is \NPH for a voting rule $r$ with $\el$ manipulators, then both \DRSM and \DRWM are \NPH for $r$ with $\el$ manipulators.
\end{observation}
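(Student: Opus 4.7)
The plan is to give a trivial Karp reduction from \MAN to \SM that simply sets all uncertainty parameters to zero. Given any instance $(\AA,\PP,c,\el)$ of the classical \MAN problem under voting rule $r$, I would construct the \SM instance $(\AA,\PP,c,(\delta_i)_{i\in[n]},\el)$ with $\delta_i=0$ for every $i\in[n]$, leaving everything else unchanged.

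Next I would verify the equivalence of the two instances. Since the Kendall-Tau distance is a metric and is zero only for identical preferences, the constraint $d_{KT}(\suc_i,\suc_i^\pr)\le 0$ forces $\suc_i^\pr=\suc_i$ for every non-manipulator $i\in[n]$. Hence the universal quantifier in \Cref{def:gsb} ranges over the single profile $\PP$ itself, and the existential quantifier over manipulators' ballots becomes exactly the existential quantifier in the definition of \MAN. Therefore the constructed \SM instance is a \YES-instance if and only if the original \MAN instance is a \YES-instance.

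Finally, the reduction is clearly computable in polynomial (indeed, linear) time, and it preserves the number $\el$ of manipulators as well as the underlying voting rule $r$. Combined with the assumed \NPH-ness of \MAN for $r$ with $\el$ manipulators, this establishes that \SM is \NPH for $r$ with $\el$ manipulators. There is no real technical obstacle here; the observation is essentially a sanity check recording that the incomplete-information variant only makes the manipulator's task harder, so all previously known hardness results (such as the coalition manipulation hardness of maximin by Faliszewski et al.~\cite{FaliszewskiHS08,FaliszewskiHS10} cited in \Cref{tbl:summary}) transfer immediately.
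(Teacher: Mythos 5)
Your proposal is correct and matches the paper's own argument: the paper justifies the observation by noting that classical \MAN is exactly the special case of \SM with $\delta_i=0$ for every $i\in[n]$, which is precisely the reduction you give. Spelling out that $d_{KT}(\suc_i,\suc_i^\pr)\le 0$ collapses the universal quantifier onto $\PP$ itself is a fine (if slightly more explicit) rendering of the same one-line idea.
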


% \longversion{

% \begin{proposition}\label{prop}
% 	% \begin{enumerate}[label=(\roman*)]
% The \SM problem is \NPC for the maximin and Copeland$^\alpha, \alpha\in[0,\frac{1}{2})\cup(\frac{1}{2},1],$ voting rules even for $2$ manipulators. 
% 	% \end{enumerate}
% \end{proposition}
% }

\graphicspath{ {./experiments/} }
\section{Results}

We present our theoretical results in this section. We begin by presenting our algorithm for the scoring rules.

\subsection{Results for \DRSM}

\begin{theorem} \label{thm:sc}
 There is a polynomial-time algorithm for the \DRSM problem for every scoring rule if we have only one manipulator.
\end{theorem}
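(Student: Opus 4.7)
The plan is to reduce stable manipulation with a single manipulator to a per-alternative, per-voter worst-case computation followed by a simple bipartite matching. First, I would argue that placing $c$ at the top position of the manipulator's vote is without loss of generality: given any manipulator vote in which $c$ sits at some position $p>1$, moving $c$ to the top and shifting the alternatives originally at positions $1,\ldots,p-1$ down by one weakly increases $c$'s score from the manipulator (from $\alpha_p$ to $\alpha_1$) and weakly decreases every other alternative's score, since the scoring vector is non-increasing. Thus the manipulator's task reduces to choosing a bijection $q : \AA \setminus \{c\} \to \{2,\ldots,m\}$ of positions for the remaining alternatives.

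Next, I would decouple the adversary's worst case across voters. For $c$ to co-win under every admissible perturbation, we need, separately for each $a \ne c$,
\[
\max_{(\succ_i')_{i\in[n]}} \sum_{i=1}^n \bigl(\alpha_{\mathrm{pos}_{\succ_i'}(a)} - \alpha_{\mathrm{pos}_{\succ_i'}(c)}\bigr) \;\le\; \alpha_1 - \alpha_{q(a)},
\]
where the maximum ranges over profiles satisfying $d_{KT}(\succ_i,\succ_i') \le \delta_i$ for every $i$. Because the $\delta_i$ constraints are independent per voter and scores are additive, this maximum decomposes as $\sum_{i\in[n]} \Delta_i(c,a)$, where $\Delta_i(c,a) := \max_{\succ'} \bigl(\alpha_{\mathrm{pos}_{\succ'}(a)} - \alpha_{\mathrm{pos}_{\succ'}(c)}\bigr)$ with $\succ'$ ranging over preferences with $d_{KT}(\succ_i,\succ') \le \delta_i$.

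The technical heart of the proof is the following closed form for the minimum Kendall-Tau distance from $\succ_i$ to any preference placing $a$ at position $p_a'$ and $c$ at position $p_c' \ne p_a'$:
\[
|p_{\succ_i}(a) - p_a'| + |p_{\succ_i}(c) - p_c'| - \eps,
\]
with $\eps = 1$ if $a$ and $c$ flip relative order and $\eps = 0$ otherwise. The lower bound follows by inversion-counting: at least $|p_{\succ_i}(a) - p_a'|$ pairs involving $a$ must become inverted (likewise for $c$), with the pair $(a,c)$ double-counted precisely when it flips. Achievability is witnessed by the explicit preference obtained by moving only $a$ and $c$ while leaving the other $m-2$ alternatives in their original relative order. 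Consequently $\Delta_i(c,a)$ is the maximum of $\alpha_{p_a'} - \alpha_{p_c'}$ over all feasible $(p_a',p_c')$, computable in $O(m^2)$ time per pair $(i,a)$.

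Finally, set $T_a := \sum_{i\in[n]} \Delta_i(c,a)$ and let $t_a$ be the smallest $t \in \{2,\ldots,m\}$ with $\alpha_t \le \alpha_1 - T_a$ (report \textsc{no} if no such $t$ exists). The manipulator succeeds iff there is a bijection $q : \AA \setminus \{c\} \to \{2,\ldots,m\}$ with $q(a) \ge t_a$ for every $a$. Because each allowed set $\{t_a,\ldots,m\}$ is a suffix of $\{2,\ldots,m\}$, Hall's condition reduces to the greedy check: sort the alternatives so that $t_{a_1} \ge t_{a_2} \ge \cdots \ge t_{a_{m-1}}$ and assign $a_j$ to position $m-j+1$, accepting iff $t_{a_j} \le m-j+1$ for every $j$. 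The main obstacle will be the inversion-counting argument in the third step that pins down the $-\eps$ correction and verifies that only $a$ and $c$ need to move in order to realize the minimum Kendall-Tau distance; everything else is routine.
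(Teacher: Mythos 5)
Your overall architecture is sound and closely parallels the paper's: put $c$ on top, reduce stability to the per-alternative condition $\sum_{i}\Delta_i(c,a)\le\alpha_1-\alpha_{q(a)}$ by decomposing the adversary's optimization voter by voter, and then assign positions via the suffix/Hall argument (the paper does the same thing as a greedy ``safe position'' construction plus a pigeonhole step). The problem is the closed form you use to compute $\Delta_i(c,a)$, which you yourself flag as the main obstacle: the claimed minimum Kendall-Tau cost $|p_{\succ_i}(a)-p_a'|+|p_{\succ_i}(c)-p_c'|-\eps$ is only a lower bound, and your achievability witness fails. Take $\succ_i\,=\,c\succ a\succ x$ and target $(p_a',p_c')=(2,3)$. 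The unique order with $a$ second and $c$ third is $x\succ a\succ c$, which is at Kendall-Tau distance $3$ from $\succ_i$ (all three pairs flip), whereas your formula gives $0+2-1=1$. The formula breaks exactly when $a$ and $c$ swap relative order but one of them keeps its absolute position: the other must then drag extra third alternatives past the stationary one, and the ``move only $a$ and $c$'' permutation costs $2$ more than you claim. Since the formula can only under-report costs, it enlarges the adversary's feasible set and over-reports $\Delta_i(c,a)$, so your algorithm can answer \NO on \YES instances.

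A concrete failure: $2$-approval with $\AA=\{c,a,x\}$, votes $c\succ a\succ x$ and $c\succ x\succ a$, $\delta_1=\delta_2=1$, one manipulator. Moving $c$ from position $1$ to position $3$ flips both pairs $(c,a)$ and $(c,x)$ and hence costs two inversions, so $c$ stays in the top two of every admissible perturbation of every vote and always attains the maximum score of $3$; this is a \YES instance for any manipulator vote. But your formula declares the configuration $(p_a',p_c')=(2,3)$ feasible at cost $1$ in the first vote, giving $\Delta_1(c,a)=1$, hence $T_a=1$ and $t_a=3$; by symmetry $t_x=3$, no bijection into $\{2,3\}$ satisfies both constraints, and you output \NO. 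The paper sidesteps this trap by never writing down a distance formula: for each split $j$ of the budget $\delta_i$ it explicitly builds the perturbed vote by $j$ adjacent right-shifts of $c$ followed by $\delta_i-j$ adjacent left-shifts of $a$, so every profile charged to the adversary is genuinely within budget. Your proof would be repaired by replacing the closed form with the correct minimum distance (count the elements of $\AA\setminus\{a,c\}$ that must cross $a$, those that must cross $c$, and add one for the $(a,c)$ flip) or by adopting the paper's budget-split enumeration; the remainder of your argument then goes through.
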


\begin{proof}
We make use of a greedy construction similar to the one used in~\cite{bartholdi1989computational}. On a high level, we greedily place alternatives while they can be {\em``safely"} placed. The safety of a position for an alternative $a$ is checked by creating the `worst possible profile' for the candidate $c$ w.r.t $a$.

 Let $\alpha=(\alpha_i)_{i\in[m]}$ be an arbitrary scoring rule and $(\AA,\PP,c,(\delta_i)_{i\in[n]},\el=1)$ be an arbitrary instance of \DRSM for $\alpha$. We iteratively try to construct a manipulator's preference $\suc_M$ which results in a successful manipulation. Without loss of generality, we place the alternative $c$ at the first position of $\suc_M$ in the first iteration. Iteratively, suppose we have already placed alternatives at every position in $\{1,\ldots,t-1\}$ for some $2\le t\le m$ and we next wish to place an alternative at the $t$-th position. Let $\AA_{t-1}$ be the set of alternatives which are placed within the first $t-1$ positions; we obviously have $c\in\AA_{t-1}$. We now check if there exists an alternative $a\in\AA\setminus\AA_{t-1}$ which can be placed at the $i$-th position {\em``safely"}, then we place the alternative $a$ at the $i$-th position and go to the next iteration; otherwise we output that the instance is a \NO instance. We say that the position $t$ is {\em``safe"} for an alternative $a\in\AA\setminus\AA_{t-1}$ if there does not exist any $n$-voter profile \QQ such that (i) the Kendall Tau distance between the $j$-th preferences of \PP and \QQ is at most $\delta_j$ for every $j\in[n]$ and (ii) the score of the alternative $a$ is more than the score of the alternative $c$ in the profile $(\QQ,\suc_M)$ where $\suc_M$ is any preference which places the alternatives $c$ and $a$ at positions $1$ and $t$ respectively. We next describe how to check, in polynomial-time, whether a position $t\in\{2,\ldots,m\}$ is safe for an alternative $a$.
 
 Before proceeding further, let us define some notation. For a preference $\suc\in\LL(\AA)$ and an alternative $a\in\AA$, let $\text{rank}(\suc,a)$ be the position of the alternative $a$ in the preference order $\suc$. We define $RS(\suc,a,k)$ to be the preference $\suc^\pr$ obtained by shifting the alternative $a$ to the right by $\min(k,m-\text{rank}(\suc,a))$ positions. Similarly, we define $LS(\suc,a,k)$ to be the preference $\suc^\pr$ obtained by shifting the alternative $a$ to the left by $\min(k,\text{rank}(\suc,a)-1)$ positions. We use $S(\suc,a)$ and $S(\RR,a)$ to denote the score of the alternative $a$ in a preference \suc and a preference profile \RR respectively. From the given preference profile $\PP=(\suc_i)_{i\in[n]}$, we construct another preference profile $\QQ^a=(\suc_i^\pr)_{i\in[n]}$ as follows. For every $i\in[n]$, let $j_i\in\{0,1,\ldots,\delta_i\}$ be the integer by which degrading the position of the alternative $c$ and followed by improving the position of the alternative $a$ in $\suc_i$, is worst possible for $c$ with respect to $a$ in $\suc_i$. Formally, for $j\in\{0,1,\ldots,\delta_i\}$, let $\Delta(\suc_i,c,j,a,\delta_i-j)$ be the decrease of the score of the alternative $c$ plus the increase in the score of $a$ if we degrade the position of $c$ in the preference $\suc_i$ by $j$ and then we improve the position of $a$ by $\delta_i-j$; that is $\Delta(\suc_i,c,j,a,\delta_i-j)=S(\suc_i,c)-S(LS(RS(\suc_i,c,j),a,\delta_i-j),c)+S(LS(RS(\suc_i,c,j),a,\delta_i-j),a)-S(\suc_i,a)$. Then we choose $j_i\in\argmax_{j\in\{0,1,\ldots,\delta_i\}} \Delta(\suc_i,c,j,a,\delta_i-j)$. We define $\suc_i^\pr = LS(RS(\suc_i,c,j_i),a,\delta_i-j_i)$. For an alternative $a\in\AA\setminus\AA_{t-1}$ and a position $t$, we say that the position $t$ (in the manipulator's vote) is safe for the alternative $a$ if $S(\QQ^a,c)+\alpha_1 \ge S(\QQ^a,a)+\alpha_t$. This concludes the description of our algorithm. Clearly, our algorithm runs in polynomial-time. We next prove its correctness.
 
 Suppose the algorithm outputs that the input instance is a \YES instance. Then we claim that the manipulator's preference $\suc_M$ is a successful manipulation. Suppose not, then there exists an alternative $a\in\AA\setminus\{c\}$ and an $n$-voters preference profile $\RR^a$ such that (i) the Kendall Tau distance between the $j$-th preferences of \PP and $\RR^a$ is at most $\delta_j$ for every $j\in[n]$ and (ii) the score of the alternative $a$ is more than the score of the alternative $c$ in the profile $(\RR^a,\suc_M)$. Suppose the position of the alternative $a$ in $\suc_M$ is $j_a\in\{2,\ldots,m\}$. Then, from the design of the algorithm, it follows that $S(\QQ^a,c)+\alpha_1 \ge S(\QQ^a,a)+\alpha_{j_a}$ where $\QQ^a$ is the profile considered by the algorithm in the $j_a$-th iteration (when the alternative $a$ was placed at the $j_a$-th position) for checking safety of $a$ at position $j_a$. From the construction of $\QQ^a$, it follows that $S(\QQ^a,a)-S(\QQ^a,c) \ge S(\RR^a,a)-S(\RR^a,c)$ and thus we have the following 
 $$ S(\RR^a,a)-S(\RR^a,c) \le S(\QQ^a,a)-S(\QQ^a,c) \le \alpha_1-\alpha_{j_a} $$
 which implies that $S(\RR^a,c)+\alpha_1 \ge S(\RR^a,a)+\alpha_{j_a}$. However, this contradicts our assumption that the score of the alternative $a$ is more than the score of the alternative $c$ in the profile $(\RR^a,\suc_M)$. Hence the instance was indeed a \YES instance. Now suppose that the algorithm outputs that the input instance is a \NO instance. Then, there exists an integer $t\in\{2,\ldots,m\}$ such that the algorithm does not find any alternative in the $t$-th iteration to be placed at position $t$ safely. We observe that, if a position $k$ is unsafe for an alternative $x\in\AA\setminus\{c\}$, then the position $k-1$ is also unsafe for $x$. Then we have $m-t+1$ alternatives, namely the alternatives in the set $\AA\setminus\AA_{t-1}$, who must appear within the rightmost $m-t$ positions of any manipulator's preference $\suc_M^\pr$ if $\suc_M^\pr$ were to result in a successful manipulation, which is, by pigeonhole principle, impossible. Hence the input instance was indeed a \NO instance\longversion{ and thus the algorithm is correct.}
\end{proof}

% We next present our polynomial-time algorithm for the \\DRSM probl\DRSM for the max\DRSMin voting rule if we have only one manipulator.
 We use the same greedy strategy for the maximin rule. This time, however, the maximin scores of the $c$ and the alternative $a$ in consideration must be examined across all valid profiles for checking safety while placing $a$ in the manipulator's preference. We resolve this by first guessing the alternative against whom $c$ has the worst pairwise election.

\begin{theorem}\label{thm:maximin}
 There exists a polynomial-time algorithm for the \DRSM problem for the maximin voting rule if we have only one manipulator.
\end{theorem}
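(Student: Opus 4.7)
The plan is to adapt the iterative placement strategy of \Cref{thm:sc} to the pairwise-margin structure of maximin. I would place $c$ at position~$1$ of the manipulator's vote $\succ_M$, so that the manipulator contributes $+1$ to $D(c,y)$ for every $y\ne c$, and then fill positions $t=2,\ldots,m$ one at a time, placing at each step any still-unplaced alternative whose placement at position~$t$ is ``safe'' in the sense that no adversarial perturbation of the non-manipulators' votes can push its maximin score above $c$'s; if at some iteration no remaining alternative is safe, the instance would be declared a \NO-instance.

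For the safety check at position~$t$ for an alternative $a$, the crucial observation is that, with $c$ at the top and $a$ at position~$t$, the manipulator's contribution to $D(a,z)$ depends only on whether $z$ lies in the already-placed set $\AA_{t-1}$ (contribution $-1$) or in the remaining set to be placed below $a$ (contribution $+1$); in particular it is independent of the precise order chosen below $a$. For each candidate pivot $y^*\ne c$ (the opponent the adversary uses to minimize $D'(c,\cdot)$) I would construct a worst-case adversary profile $Q^{a,y^*}$ voter-by-voter: in each voter~$i$, allocate the budget $\delta_i$ greedily between (i)~flipping the pair $(c,y^*)$ when $c\succ_i y^*$, whose minimum Kendall-Tau cost equals one plus the number of alternatives strictly between $c$ and $y^*$ in $\succ_i$, and (ii)~flipping the pair $(a,z)$ when $z\succ_i a$ for each $z\ne a$, with the analogous per-pair cost. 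Position~$t$ is safe for $a$ iff for every pivot $y^*\ne c$ the resulting margins satisfy $\min_{z\ne a}D(a,z)\le D(c,y^*)$; there are only $O(m)$ choices of $y^*$, so the test runs in polynomial time.

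Correctness would rest on a monotonicity lemma analogous to the one used for scoring rules: for a fixed above-set $\AA_{t-1}$, if $a$ is unsafe at position $t$, then $a$ is also unsafe at every position $t'\le t$ whose above-set is a subset of $\AA_{t-1}$. The reason is that moving $a$ higher only shifts some $z$'s from the ``above $a$'' side (manipulator contribution $-1$ to $D(a,z)$) to the ``below $a$'' side (contribution $+1$), which weakly increases every $D(a,z)$ and hence the adversary's attainable $\min_{z\ne a} D(a,z)$, while $c$'s margin profile is unchanged. Combined with the pigeonhole step from \Cref{thm:sc}, this certifies that a greedy failure at iteration~$t$ implies no manipulator's vote can succeed.

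The main obstacle I expect is verifying that the per-voter budget allocation used to construct $Q^{a,y^*}$ is always simultaneously realizable as a valid permutation at Kendall-Tau distance at most $\delta_i$ from $\succ_i$: the flip of $(c,y^*)$ and the flips of the various $(a,z)$ within the same voter can share ``in-between'' alternatives, so the joint KT cost need not decompose into a sum of per-pair costs, and not every subset of flipped pairs corresponds to a valid linear order. I plan to handle this by realizing the worst-case perturbation via monotone displacements within each voter -- moving $c$ downward and moving $a$ and $y^*$ upward -- and showing by a case analysis on the relative positions of $c$, $y^*$, and $a$ in $\succ_i$ that the number of pair orderings that genuinely change under this monotone rearrangement stays within $\delta_i$ whenever the individual per-pair flips are separately affordable.
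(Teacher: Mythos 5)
Your proposal follows essentially the same route as the paper's proof: the greedy position-by-position construction with $c$ on top, a safety test that guesses the pivot $b$ attaining $c$'s maximin score and builds one worst-case perturbed profile per pivot via monotone per-voter displacements, and the unsafe-monotonicity-plus-pigeonhole argument for the \NO direction. The one step you leave genuinely open --- how each voter's budget $\delta_i$ is split between flipping $(c,b)$ and improving $a$'s margins, and which pairs $(a,z)$ to target given that the adversary's objective $\min_{z\ne a}D(a,z)>D(c,b)$ couples voters through the minimum --- is exactly where the paper commits to a concrete rule: pay the minimum cost $j_i$ to place $c$ immediately below $b$ (when affordable) and spend the remaining $\delta_i-j_i$ on a maximal left-shift of $a$, a single move that simultaneously weakly dominates every other expenditure of that budget on the pairs $(a,z)$ and is realizable by construction, which also disposes of the joint-realizability worry you raise at the end.
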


\longversion{

\begin{proof}
 Let $(\AA,\PP,c,(\delta_i)_{i\in[n]},\el=1)$ be an arbitrary instance of \DRSM for the maximin voting rule. On a high level, our algorithm for the maximin voting rule is similar to our algorithm for scoring rules: we put $c$ at the first position of the manipulator's vote in the first iteration, and then iteratively, in the $t$-th iteration, if $\AA_{t-1}$ is the set of alternatives within the first $t-1$ positions, we place an alternative $a\in\AA\setminus\AA_{t-1}$ if it is safe to do so; that is, given the partial preference of the manipulator constructed so far, placing the alternative $a$ at the $t$-th position does not make the maximin score of $a$ become more than the maximin score of the alternative $c$ for any $n$-voters preference profile \QQ where the Kendall Tau distance between the $i$-th preferences of \PP and \QQ is not more than $\delta_i$. The only thing that changes here from the algorithm in \Cref{thm:sc} is the algorithm for checking safety which we explain below.
 
 To introduce an algorithm for checking safety, we begin by assuming that the given alternative $a$ cannot be placed safely at the $t$-th position given the alternatives at the first $t-1$ positions; that is, there exists an $n$-voters preference profile \QQ with the properties stated above so that $a$'s maximin score is higher than that of $c$. We first guess an alternative $b\in\AA\setminus\{c\}$ (the alternative $b$ can be the alternative $a$ itself) such that the maximin score of $c$ in \QQ is $\DD_\QQ(c,b)$. From the given preference profile $\PP=(\suc_i)_{i\in[n]}$, we construct another preference profile $\QQ^a_b=(\suc_i^\pr)_{i\in[n]}$  so that  (i) the Kendall Tau distance between the $i$-th preferences of \PP and $\QQ^a_b$ is not more than $\delta_i$, and (ii) the difference between the maximin score of $a$ and $c$ is the maximum possible (the maximin score of $a$ being higher). For an $i\in[n]$, the preference $\suc_i$ can be one of the following types:
 
 {\em Case I -- it is possible to place the alternative $c$ on the right of the alternative $b$ by swapping at most $\delta_i$ pairs of alternatives:} Let $j_i$ be the minimum number of swaps needed in $\suc_i$ to place the alternative $c$ on the right of the alternative $b$; $j_i$ is $0$ if $c$ already appears on the right of $b$. Then we define $\suc_i^\pr = LS(RS(\suc_i,c,j_i),a,\delta_i-j_i)$; that is, we first shift $c$ right to place it immediately after $b$ and then shift $a$ left as much as we can.
 
 {\em Case II -- it is not possible to place the alternative $c$ on the right of the alternative $b$ by swapping at most $\delta_i$ pairs of alternatives:} We define $\suc_i^\pr = LS(\suc_i,a,\delta_i)$; that is, we shift $a$ left as much as we can.
 
 This finishes the description of the preference profile $\QQ^a_b$. Let $\suc_M^\pr$ be any arbitrary completion of the partially constructed preference of the manipulator. We declare the alternative $a$ to be safe at the $t$-th iteration if, for every alternative $b\in\AA\setminus\{c\}$, the alternative $c$ co-wins in the preference profile $(\QQ^a_b,\suc_M^\pr)$. This concludes the description of our algorithm. Clearly our algorithm runs in polynomial-time. We next prove its correctness.
 
 Suppose the algorithm outputs that the input instance is a \YES instance. Then we claim that the manipulator's preference $\suc_M$ is a successful manipulation. Suppose not, then there exists an alternative $a\in\AA\setminus\{c\}$ and an $n$-voters preference profile $\RR^a$ such that (i) the Kendall Tau distance between the $j$-th preferences of \PP and $\RR^a$ is at most $\delta_j$ for every $j\in[n]$ and (ii) the score of the alternative $a$ is more than the score of the alternative $c$ in the profile $(\RR^a,\suc_M)$. Suppose the position of the alternative $a$ in $\suc_M$ be $j_a\in\{2,\ldots,m\}$ and $b\in\AA\setminus\{c\}$ be an alternative such that the maximin score of the alternative $c$ is $\DD_{(\RR^a,\suc_M)}(c,b)$. Then, from the design of the algorithm it follows that the maximin score of $a$ is more than the maximin score of $c$ in the preference profile $(\QQ^a_b,\suc_M)$. This contradicts our assumption that the algorithm declared that placing the alternative $a$ at the $j_a$-th position in the $j_a$-th iteration was safe (since the algorithm must have placed the alternative $a$ at the $j_a$-th position in the $j_a$-th iteration). Hence, the input instance is indeed a \YES instance. Now suppose that the algorithm outputs that the input instance is a \NO instance. Then, there exists an integer $t\in\{2,\ldots,m\}$ such that the algorithm finds that it is unsafe for every alternative $a\in\AA\setminus\AA_{t-1}$ to appear before every alternative in $\AA\setminus(\AA_{t-1}\cup\{a\})$. However, in every possible manipulator's preference $\suc_M$, there exists an alternative $a\in\AA\setminus\AA_{t-1}$ which appears before every alternative in $\AA\setminus(\AA_{t-1}\cup\{a\})$. Hence the input instance is indeed a \NO instance and thus the algorithm is correct.
\end{proof}

}

For the $k$-approval voting rule, we are able to reduce the \DRSM problem with any number of manipulators to an equivalent maximum flow problem thereby obtaining a polynomial-time algorithm.

\begin{theorem}\label{thm:kapp}
There exists a polynomial-time algorithm for the \DRSM problem for the $k$-approval voting rule for any number of manipulators and any $k$.
\end{theorem}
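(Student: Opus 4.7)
The plan is to handle the adversary's perturbation and the manipulators' approval allocation as two separate subproblems, and then combine them through a single max flow instance. The enabling observation is that under $k$-approval, the score of any alternative decomposes as a sum of indicators across voters, so for each fixed $a \ne c$ the worst-case quantity $\max_\QQ\bigl(\text{score}(a,\QQ) - \text{score}(c,\QQ)\bigr)$, taken over Kendall-Tau-budgeted perturbations $\QQ$ of the non-manipulator profile $\PP$, decomposes as an independent per-voter maximization.

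Concretely, for each $i \in [n]$ and $a \in \AA \setminus \{c\}$, I define
\[ w_a^i \;=\; \max_{\succ' \in \LL(\AA),\, d_{KT}(\succ_i,\succ') \le \delta_i} \bigl( \mathbf{1}[a \in \text{top-}k(\succ')] - \mathbf{1}[c \in \text{top-}k(\succ')] \bigr) \;\in\; \{-1,0,1\}. \]
Computing $w_a^i$ reduces to a short case analysis on the positions $p_a, p_c$ of $a$ and $c$ in $\succ_i$ together with $\delta_i$. For example, when $p_c \le k < p_a$, the minimum number of adjacent swaps needed to simultaneously move $a$ into the top $k$ and $c$ out of the top $k$ equals $(k+1-p_c) + \max(0, p_a - k - 1)$, the $-1$ reflecting the one-swap saving when $c$'s descent must cross $a$; analogous formulas cover the other positional configurations of $a$ and $c$ and also determine whether the intermediate value $w_a^i = 0$ is attainable when $+1$ is out of reach. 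Set $W_a = \sum_{i=1}^n w_a^i$, which by the decomposition above equals the maximum score excess of $a$ over $c$ that the adversary can induce across the non-manipulator profile.

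Every manipulator can (and should) place $c$ in her top $k$, so $c$ receives exactly $\el$ from the manipulators. Letting $M_a \in \{0,1,\ldots,\el\}$ denote the number of manipulators who approve $a \ne c$, the instance is a \YES instance iff we can choose $(M_a)_{a \ne c}$ such that (i) $M_a \le \el - W_a$ for every $a \ne c$, so that no perturbation lets $a$ catch up with $c$; (ii) $\sum_{a \ne c} M_a = \el(k-1)$, since every manipulator approves exactly $k-1$ non-$c$ alternatives; and (iii) the sequence $(M_a)$ is realizable as the column sums of an $\el \times (m-1)$ binary matrix with row sums all equal to $k-1$.

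Finally, I encode conditions (i)--(iii) as a single max flow instance: a source $s$, a capacity-$(k-1)$ edge from $s$ to each of the $\el$ manipulator nodes, a unit-capacity edge from each manipulator to each $a \in \AA \setminus \{c\}$, and a capacity-$\min(\el, \el-W_a)$ edge from $a$ to the sink $t$ (with an immediate \NO output if some $W_a > \el$). The input is a \YES instance iff the maximum $s$-$t$ flow equals $\el(k-1)$, which is verifiable in polynomial time. The main obstacle in this plan is the decomposition argument --- justifying that the adversary's worst perturbation can be analyzed one voter at a time for each fixed $a$, which hinges crucially on the additive structure of $k$-approval scores --- together with verifying that the flow network precisely captures the Gale-Ryser-type realizability of the manipulator approval sets; the per-voter cost computation itself is a routine case analysis.
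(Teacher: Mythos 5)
Your proposal is correct and follows essentially the same route as the paper's proof: both reduce the problem, via the observation that the adversary's worst-case perturbation decomposes voter by voter for each fixed alternative $a$, to a per-alternative cap on how many manipulators may approve $a$ (your $\el - W_a$ is exactly the paper's $\lambda_a = \el + n_5 - n_1$), and both then decide feasibility with the identical three-layer max-flow network. The only cosmetic difference is that the paper organizes the per-voter computation as a five-way classification of the preferences rather than your direct per-voter formula for $w_a^i$.
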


\begin{proof}
 Let $(\AA,\PP,c,(\delta_i)_{i\in[n]},\el)$ be an arbitrary instance of \DRSM for the $k$-approval voting rule. We may assume without loss of generality that the alternative $c$ is placed at the first position in every preference of the manipulators. For every alternative $a\in\AA\setminus\{c\}$, we compute the maximum number $\lambda_a$ of manipulators' preferences where the alternative $a$ can appear within the first $k$ positions in any manipulators' preference profile which results in successful manipulation. Each preference $\suc_i, i\in[n]$ belongs to exactly one of the following types: (i) simultaneously $c$ can be placed outside of the first $k$ positions and $a$ can be placed within the first $k$ positions without changing order of more than $\delta_i$ pairs of alternatives in $\suc_i$ (ii) $c$ can be placed outside of the first $k$ positions without changing order of more than $\delta_i$ pairs of alternatives and $a$ can not be placed within the first $k$ positions without changing order of more than $\delta_i$ pairs of alternatives in $\suc_i$ (iii) $c$ can not be placed outside of the first $k$ positions without changing order of more than $\delta_i$ pairs of alternatives and $a$ can be placed within the first $k$ positions without changing order of more than $\delta_i$ pairs of alternatives in $\suc_i$ (iv) either $c$ can be placed outside of the first $k$ positions or $a$ can be placed within the first $k$ positions (but not both) without changing order of more than $\delta_i$ pairs of alternatives in $\suc_i$ (v) both $c$ can not be placed outside of the first $k$ positions without changing order of more than $\delta_i$ pairs of alternatives and $a$ can not be placed within the first $k$ positions without changing order of more than $\delta_i$ pairs of alternatives in $\suc_i$. Let the number of such preferences respectively be $n_1, n_2, n_3, n_4,$ and $n_5$. Let $S(\PP,x)$ be the $k$-approval score of any alternative $x\in\AA$ in the profile \PP. We define $\lambda_a=(\el + n_3 + n_5)-(n_1 + n_3) = \el + n_5 - n_1$; that is, loosely speaking, the worst profile for $c$ with respect to the alternative $a$ derived from \PP is to push $c$ outside first $k$, if possible, and then push $a$ within first $k$, if possible. If, for any alternative $a\in\AA\setminus\{c\}$, we have $\lambda_a<0$, then the algorithm outputs \NO.

 We now construct the following flow network $\GG=(\VV,\EE,s,t,c:E\longrightarrow\NB_{\ge1})$.
 \begin{align*}
 	\VV &= \{s,t\} \cup \{u_i: i\in[\el]\} \cup \{v_a | a \in \AA\setminus\{c\}\}\\
 	\EE &= \{(s,u_i):i\in[\el]\} \\
 	&\cup \{(u_i,v_a):i\in[\el],a\in\AA\setminus\{c\}\} \\
 	&\cup \{(v_a,t): a\in\AA\setminus\{c\}\}
 \end{align*}
 
 We now describe the capacities of the edges. The capacity of each outgoing edge from $s$ is $k-1$. For $a\in\AA\setminus\{c\}$, the capacity of the edge $(v_a,t)$ is $\lambda_a$. The capacity of every other edge is $1$. The algorithm outputs that the instance is a \YES instance if and only if there is an $s-t$ flow in \GG of value $\el(k-1)$. This finishes the description of the algorithm. We now prove its correctness.
 
 Suppose the algorithm outputs that the instance is a \YES instance. Then \GG has an $s-t$ flow $f$ of value $\el(k-1)$. We may assume without loss of generality that every edge carries an integral flow in $f$ since the capacity of every edge is some integer and $f$ is a maximum flow. We now construct a preference profile $\PP_M$ of \el manipulators. The manipulator $i$ places an alternative $a\in\AA\setminus\{c\}$ within first $k$ positions in $\PP_M$ if and only if the edge $(u_i,v_a)$ carries $1$ unit of flow under $f$. Since every manipulator places $c$ at the first position of her preference, and the incoming flow at vertex $u_i, i\in[\el]$ is $k-1$ under $f$, we have described which $k$ alternatives appear within the first $k$ positions of each manipulator's preference in $\PP_M$. We claim that the preference profile $\PP_M$ results in a successful manipulation. Indeed, for any $n$-voters profile \QQ where the Kendall Tau distance between the $i$-th preferences of \PP and \QQ is at most $\delta_i$, we have $S((\QQ,\PP_M),c)-S(\QQ,a)\ge\lambda_a$ for every alternative $a\in\AA\setminus\{c\}$. Since every alternative $a\in\AA\setminus\{c\}$ appears within the first $k$ positions at most $\lambda_a$ times in $\PP_M$, it follows that $\PP_M$ indeed results in successful manipulation. On the other hand, if the algorithm outputs \NO, then one of the following two cases happen. In the first case there exists an alternative $a\in\AA\setminus\{c\}$ such that $\lambda_a<0$. Consider the preference profile $\QQ^a$ obtained from \PP by simultaneously moving $c$ outside first $k$ positions in every preference of type (i), (ii), and (iv) and $a$ within first $k$ positions in every preference of type (i) and (iii). We observe that the $k$-approval score of $a$ in $\QQ^a$ is more than the $k$-approval score of $c$ from $\QQ^a$ plus \el and thus the instance is indeed a \NO instance. In the second case, suppose the algorithm outputs \NO because the maximum flow of \GG is strictly less than $\el(k-1)$. In this case too, for any possible preference profile $\PP_M$ of the manipulators, there exists an alternative $a\in\AA\setminus\{c\}$ which appears within first $k$ positions strictly more than $\lambda_a$ times. Then, in the profile $(\QQ^a,\PP_M)$, the $k$-approval score of the alternative $a$ is strictly more than the $k$-approval score of the alternative $c$. Hence the instance is indeed a \NO instance.
\end{proof}

In our algorithm for the simplified Bucklin rule, we check the safety of placing an alternative by counting certain types of votes.
\begin{theorem}\label{thm:sim_buck}
There exists a polynomial-time algorithm for the \DRSM problem for the simplified Bucklin voting rule if we have only one manipulator.
\end{theorem}

\longversion{
\begin{proof}
Let $(\AA,\PP,c,(\delta_i)_{i\in[n]},\el=1)$ be an arbitrary instance of \DRSM for the simplified Bucklin voting rule. On a high level, our algorithm for the simplified Bucklin voting rule is similar to our algorithm for scoring rules and the maximin voting rule. The only difference being, given a position $t\in\{2,\ldots,m\}$ and an alternative $a\in\AA\setminus\{c\}$, how do we decide if placing the alternative $a$ at position $t$ in the manipulator's vote is safe. We describe this below and skip repeating the other parts since they are exactly similar to the algorithms for the scoring rules and the maximin voting rule.

Let us denote by $S_k(\RR,a)$ the $k$-approval score of alternative $a$ in an $n$-voters  profile $\RR$. We observe that the alternative $c$ wins in \RR under the simplified Bucklin rule if and only if, for every other alternative $a\in\AA\setminus\{c\}$ and for every $k\in \{1,2,\ldots m\}$, $S_k(\RR,a) > \frac{n}{2} \implies S_k(\RR,c) > \frac{n}{2}$.  For any $k$ let (i) $n_1$ be the number of preferences $\suc_i\in\PP$ where simultaneously $c$ can be placed outside the first $k$ positions and $a$ can be placed within the first $k$ positions by swapping at most $\delta_i$ pairs of alternatives (call these preference type (i)), (ii) $n_2$ the number of preferences $\suc_i\in\PP$ where either $c$ can be placed outside the first $k$ positions or $a$ can be placed within the first $k$ positions by swapping at most $\delta_i$ pairs of alternatives but not both can be done (call these preference type (ii)), (iii) $n_3$ the the number of preferences $\suc_i\in\PP$ where $c$ can be placed outside the first $k$ positions and $a$ can not be placed within the first $k$ positions by swapping at most $\delta_i$ pairs of alternatives (call these preference type (iii)), (iv) $n_4$ the number of preferences $\suc_i\in\PP$ where $c$ can not be placed outside the first $k$ positions but $a$ can be placed within the first $k$ positions by swapping at most $\delta_i$ pairs of alternatives (call these preference type (iv)), (v) $n_5$ the number of preferences $\suc_i\in\PP$ where neither $c$ can be placed outside the first $k$ positions nor $a$ can be placed within the first $k$ positions by swapping at most $\delta_i$ pairs of alternatives (call these preference type (v)). We declare that position $t$ in the manipulator's preference is not safe for an alternative $a\in\AA\setminus\{c\}$ if there exists a position $k\in\{2,\ldots,m\}$ such that there exists an $n$-voters preference profile \QQ where (i) the Kendall Tau distance between the $i$-th preferences of \PP and \QQ is at most $\delta_i$ for every $i\in[n]$ and (ii) $c$ appears within the first $k$ positions in at most $(\lceil\frac{n}{2}\rceil-1)$ preferences in \QQ (observe that, including the manipulator, we have $n+1$ voters in total) and $a$ appears within the first $k$ positions in at least $(\lceil\frac{n}{2}\rceil+1)$ positions if $k<t$ or in $\lceil\frac{n}{2}\rceil$ positions, if $k\ge t$. This happens if and only if there exists an integer $z\in\{0,1,\ldots,n_2\}$ ($z$ corresponds to the number preferences of type (ii) which are modified to put $c$ outside the first $k$ positions) such that we have $n_2-z+n_4+n_5\le \lceil\frac{n}{2}\rceil-1$ and, if $k<t$, then $n_1+n_2 -z+n_4\ge\lceil\frac{n}{2}\rceil+1$ else, if $k\ge t$, then $n_1+n_2 - z+n_4\ge\lceil\frac{n}{2}\rceil$. This concludes the description of our algorithm. Our algorithm clearly runs in polynomial-time. \longversion{We next argue its correctness.}\shortversion{In the interest of space, we omit the proof of correctness.}
\longversion{
Suppose that the algorithm outputs that the instance is a \YES instance. Then we claim that the manipulator's preference $\suc_M$ constructed by the algorithm results in a successful manipulation. Suppose not, then there exists an $n$-voters preference profile \QQ such that (i) the Kendall Tau distance between the $i$-th preferences of \PP and \QQ is at most $\delta_i$ and (ii) $c$ is not a simplified Bucklin winner in $(\QQ,\suc_M)$, that is there exists an alternative $a\in\AA\setminus\{c\}$ and a position $k\in[m]$ such that $c$ does not appear within the first $k$ positions in a majority of the preferences whereas $a$ appears within the first $k$ positions in a majority of preferences in $(\QQ,\suc_M)$. Suppose the alternative $a$ appears at the $t$-th position in $\suc_M$. Also let the number of preferences of type (ii) where $c$ is put outside the first $k$ positions in \QQ be z. Then we consider the profile $\QQ^a$ obtained from \PP where
\begin{itemize}
 \item in preferences of type (i), simultaneously $c$ is placed outside the first $k$ positions and $a$ is placed within the first $k$ positions in $\QQ^a$.
 
 \item in z number of preferences of type (ii) and all preference of type (iii), $c$ is put outside the first $k$ positions in $\QQ^a$. In $n_2-z$ number of preferences of type (ii) and all preference of type (iv), $a$ is put within the first $k$ positions in $\QQ^a$.
 
 \item all preferences of type (v) remain the same in \PP and $\QQ^a$.
\end{itemize}
It follows that, since $c$ does not get a majority within the first $k$ positions but $a$ gets a majority within the first $k$ positions in $(\QQ,\suc_M)$, $c$ does not get a majority within the first $k$ positions but $a$ gets a majority within the first $k$ positions in $(\QQ^a,\suc_M)$. This contradicts our assumption that the algorithm declared the position $t$ in the manipulator's preference safe for the alternative $a$. Hence the input instance is indeed a \YES instance.

Now suppose that the algorithm outputs that the instance is a \NO instance. For the sake of arriving to a contradiction, let us assume that there exists a manipulator's preference $\suc_M^\pr\in\LL(\AA)$ which results in successful manipulation. Since our algorithm outputs \NO, there exists an iteration $t\in\{2,\ldots,m\}$ such that, if $\AA_{t-1}$ is the set of alternatives already placed in the first $t-1$ positions by the algorithm, then every alternative $a\in\AA\setminus\AA_{t-1}$ was judged unsafe for the position $t$ in the manipulator's preference. In this case, there indeed exists an $n$-voters profile $\QQ^a\in\LL(\AA)^n$ for every alternative $a\in\AA\setminus\AA_{t-1}$ such that (i) the Kendall Tau distance between the $i$-th preferences of \PP and $\QQ^a$ is at most $\delta_i$ and (ii)  $c$ appears within the first $k$ positions in at most $(\lceil\frac{n}{2}\rceil-1)$ preferences in $\QQ^a$ (observe that, including the manipulator, we have $n+1$ voters in total) and $a$ appears within the first $k$ positions in at least $(\lceil\frac{n}{2}\rceil+1)$ positions if $k<t$ and $\lceil\frac{n}{2}\rceil$ positions if $k\ge t$ . We observe that, if a position $k$ is unsafe for an alternative $x\in\AA\setminus\{c\}$, then the position $k-1$ is also unsafe for $x$. Then we have $m-t+1$ alternatives, namely the alternatives in the set $\AA\setminus\AA_{t-1}$, who must appear within the rightmost $m-t$ positions of any manipulator's preference $\suc_M^\pr$ if $\suc_M^\pr$ results in a successful manipulation which is, by pigeonhole principle, impossible. Hence the input instance was indeed a \NO instance and thus the algorithm is correct.}
\end{proof}
}

The main idea of \Cref{thm:sim_buck} can be extended to design a polynomial-time algorithm for the Bucklin voting rule.

For the Bucklin rule, we classify votes into a constant number of meta-types, the set of all types which can be reached from a given vote. We then show that we can efficiently enumerate all possible ways of transforming preferences into one of the types in their meta-types, noting that only the number of preferences which are converted to a given type is relevant to the rest of the algorithm.  
\begin{theorem}\label{thm:buck}
There exists a polynomial-time algorithm for the \DRSM problem for the Bucklin voting rule if we have only one manipulator.
\end{theorem}

\longversion{
We use the same greedy strategy as in the case of the Simplified Bucklin rule. However, this time, the conditions for safety are more complicated, and are not obviously checkable. We observe that only the numbers of certain suitable types of the votes of others' need to be fixed to check safety, not the votes themselves, hence obtaining a polynomial-time algorithm.

\begin{proof}
Let $(\AA,\PP,c,(\delta_i)_{i\in[n]},\el=1)$ be an arbitrary instance of \DRSM for the Bucklin voting rule. On a high level, our algorithm for the Bucklin voting rule is similar to our algorithms in \Cref{thm:sc,thm:maximin,thm:sim_buck}. The only difference being, given a position $t\in\{2,\ldots,m\}$ and an alternative $a\in\AA\setminus\{c\}$, how do we decide if placing the alternative $a$ at position $t$ in the manipulator's vote is safe. We describe this below and skip repeating the other parts.

Let us denote by $S_k(\RR,a)$ the $k$-approval score of alternative $a$ in an $n$-voters profile $\RR$. We observe that the alternative $c$ wins in \RR under the Bucklin rule if and only if, for every other alternative $a\in\AA\setminus\{c\}$ and for every $k\in \{1,2,\ldots m\}$, $S_k(\RR,a) > \frac{n}{2} \implies S_{k-1}(\RR,c) > \frac{n}{2}$ or $S_k(\RR,c)\ge S_k(\RR,a)$. Given a preference $\suc\in\LL(\AA)$ and an alternative $a\in\AA\setminus\{c\}$, we define the following set $\XX=\{x_1,x_2,x_3,x_4\}$ of Boolean variables.
\begin{enumerate}[label=(\roman*)]
	\item We say that \suc satisfies $x_1$ if and only if the alternative $c$ does not appear within the first $k$ positions in $\suc$.
	
	\item We say that \suc satisfies $x_2$ if and only if the alternative $c$ does not appear within the first $k-1$ positions in $\suc$.
	
	\item We say that \suc satisfies $x_3$ if and only if the alternative $a$ appears within the first $k-1$ positions in $\suc$.
	
	\item We say that \suc satisfies $x_4$ if and only if the alternative $a$ appears within the first $k$ positions in $\suc$.
\end{enumerate}

We define the {\em ``type"} $\TT(\suc,a)\subseteq\XX$ of the preference \suc with respect to an alternative $a\in\AA\setminus\{c\}$ to be the subset of $\XX$ satisfied by the preference \suc. Before we explain our algorithm for deciding whether a position $t$ is safe for an alternative $a\in\AA\setminus\{c\}$, we need to define a few concepts and notations. Given a preference $\suc\in\LL(\AA)$, an alternative $a\in\AA\setminus\{c\}$, and a distance $\delta$, we define the {\em``meta-type"} $\MM(\suc,a,\delta)$ of \suc with respect to $a$ and $\delta$ as the set of all types reachable from \suc within a Kendall Tau distance of at most $\delta$; that is $\MM(\suc,a,\delta) = \{Z\subseteq\XX: \exists \suc^\pr\in\LL(\AA), d_{KT}(\suc,\suc^\pr)\le\delta, \TT(\suc^\pr,a)=Z\}\subseteq 2^\XX$. Let the set of all possible meta-types be $\MM=\{M_i: i\in[\nu]\}$. An important observation is that, since \XX has only $4$ elements, only $2^{2^4}=65536$ (which is a constant) different meta-types are possible. For ease of exposition, let us define $\gamma=16, \nu=65536$. For an alternative $a\in\AA\setminus\{c\}$, let $n_i$ be the number of preferences in \PP of meta-type $M_i$ and $\lambda_i$ the number of types in the meta-type $M_i$; that is $M_i=\{T_{i,1},\ldots,T_{i,\lambda_i}\}$. Another important observation is that, given any preference $\suc\in\LL(\AA)$, a distance $\delta$, and a type $T\in\XX$, it can be checked in polynomial-time whether $T\in\MM(\suc,a,\delta)$; hence the set $\MM(\suc,a,\delta)$ can be computed in polynomial-time.

We now describe our algorithm for whether $a$ is safe at position $t$. For every $k\in[m]$, we check the following. For a tuple $\tau=(\tau_i)_{i\in[\nu]}$ where $\tau_i=(\el_j)_{j\in[\lambda_i]}$ such that $\sum_{j=1}^{\lambda_i} \el_{j} = n_i$, we define an $n$-voters preference profile $\QQ^\tau$ constructed by converting $\el_{j}$ number of preferences of meta-type $M_i$ to preferences of type $T_{i,j}$. Clearly the Kendall Tau distance between the $i$-th preferences of \PP and $\QQ^\tau$ is at most $\delta_i$. We declare position $t$ to be safe for the alternative $a$ if and only if, for every $k\in[m]$ and every possible corresponding $\tau$, the condition $S_k(\RR,a) > \frac{n}{2} \implies S_{k-1}(\RR,c) > \frac{n}{2}$ or $S_k(\RR,c)\ge S_k(\RR,a)$ holds for the profile $\RR = (\QQ^\tau,\suc_M)$ where $\suc_M$ is any manipulator's preference where $c$ and $a$ are placed respectively at positions $1$ and $t$ (Notice that the $k-1$ and $k$ approval scores of alternatives $c$ and $a$ are fixed for a fixed $\tau$).
% if $k^\pr$ is the minimum integer such that $c$ gets majority within the first $k^\pr$ positions, then $a$ does not get majority within the first $k^\pr-1$ positions, and the number of preferences where $a$ appears within the first $k^\pr$ positions is at most the number of preferences where $c$ appears within the first $k^\pr$ positions in $(\QQ^\tau,\suc_M)$ where $\suc_M$ is any manipulator's preference where $c$ and $a$ are placed respectively at positions $1$ and $t$.
 This concludes the description of our algorithm. Our algorithm runs in polynomial-time since there are $\OO(n^{\gamma\nu}\text{poly}(m))$ possible tuples $\tau$. We next argue its correctness.

Suppose that the algorithm outputs that the instance is a \YES instance. Then we claim that the manipulator's preference $\suc_M$ constructed by the algorithm results in a successful manipulation. Suppose not, then there exists an $n$-voters preference profile \QQ such that (i) the Kendall Tau distance between the $i$-th preferences of \PP and \QQ is at most $\delta_i$ and (ii) $c$ is not a Bucklin winner in $(\QQ,\suc_M)$, that is there exists an alternative $a\in\AA$ at some position $t$ in $\suc_M$ and a position $k$ such that (i) if $k< t$, then $S_k(\QQ,a) \ge \lceil\frac{n}{2}\rceil+1, S_{k-1}(\QQ,c) \le \lceil\frac{n}{2}\rceil-1$, and $S_k(\QQ,a) > S_k(\QQ,c)$, and (ii) if $k\ge t$, then $S_k(\QQ,a) \ge \lceil\frac{n}{2}\rceil+1, S_{k-1}(\QQ,c) \le \lceil\frac{n}{2}\rceil-1$, and $S_k(\QQ,a) > S_k(\QQ,c)$. Let us consider the set \XX of Boolean variables with respect to the position $k$ and the alternative $a$. Suppose the alternative $a$ appears at the $t$-th position in $\suc_M$. Let us define a tuple $\tau=(\tau_i)_{i\in[\nu]}$ where $\tau_i=(\el_j)_{j\in[\lambda_i]}$ such that $\el_{j}$ is the number of preferences of meta-type $M_i$ that are converted into a preferences of type $T_{i,j}$ in \QQ. Then it follows that (i) if $k< t$, then $S_k(\QQ^\tau,a) \ge \lceil\frac{n}{2}\rceil+1, S_{k-1}(\QQ^\tau,c) \le \lceil\frac{n}{2}\rceil-1$, and $S_k(\QQ^\tau,a) > S_k(\QQ^\tau,c)$, and (ii) if $k\ge t$, then $S_k(\QQ^\tau,a) \ge \lceil\frac{n}{2}\rceil, S_{k-1}(\QQ^\tau,c) \le \lceil\frac{n}{2}\rceil-1$, and $S_k(\QQ^\tau,a) > S_k(\QQ^\tau,c)$ which contradicts the fact that the algorithm declared the position $t$ in the manipulator's preference to be safe for the alternative $a$. Hence the instance is indeed a \YES instance.

Now suppose that the algorithm outputs that the instance is a \NO instance. For the sake of arriving to a contradiction, let us assume that there exists a manipulator's preference $\suc_M^\pr\in\LL(\AA)$ which results in successful manipulation. Since our algorithm outputs \NO, there exists an iteration $t\in\{2,\ldots,m\}$ such that, if $\AA_{t-1}$ is the set of alternatives already placed in the first $t-1$ positions by the algorithm, then every alternative $a\in\AA\setminus\AA_{t-1}$ was judged unsafe for the position $t$ in the manipulator's preference. In this case, there indeed exists an $n$-voters profile $\QQ^a\in\LL(\AA)^n$ for every alternative $a\in\AA\setminus\AA_{t-1}$ such that (i) the Kendall Tau distance between the $i$-th preferences of \PP and $\QQ^a$ is at most $\delta_i$ and (ii) there exists some position $k$ such that (a) if $k< t$, then $S_k(\QQ^a,a) \ge \lceil\frac{n}{2}\rceil+1, S_{k-1}(\QQ^a,c) \le \lceil\frac{n}{2}\rceil-1$, and $S_k(\QQ^a,a) > S_k(\QQ^a,c)$, and (b) if $k\ge t$, then $S_k(\QQ^a,a) \ge \lceil\frac{n}{2}\rceil+1, S_{k-1}(\QQ^a,c) \le \lceil\frac{n}{2}\rceil-1$, and $S_k(\QQ^a,a) > S_k(\QQ^a,c)$. We observe that, if a position $k$ is unsafe for an alternative $x\in\AA\setminus\{c\}$, then the position $k-1$ is also unsafe for $x$. Then we have $m-t+1$ alternatives, namely the alternatives in the set $\AA\setminus\AA_{t-1}$, who must appear within the rightmost $m-t$ positions of any manipulator's preference $\suc_M^\pr$ if $\suc_M^\pr$ results in a successful manipulation which is, by pigeonhole principle, impossible. Hence the input instance was indeed a \NO instance and thus the algorithm is correct.
\end{proof}
}

Due to \Cref{thm:sc,thm:maximin,thm:sim_buck,thm:buck}, one may suspect that there may exist a generic algorithm for the \DRSM problem with one manipulator which works for the class {\em responsive and monotone} voting rules that Bartholdi et al. defined~\cite{bartholdi1989computational}. Our next result refutes such a possibility as we show that the \DRSM problem is \coNPH for the Copeland$^\alpha$ voting rule for every $\alpha\in[0,1]$ which is a responsive and monotone voting rule. We reduce from the \XTCC problem which is the complement of the classical \NPC problem \XTC.
\longversion{The \XTC and \XTCC are defined as follows.

\begin{definition}[\XTC and \XTCC]
	Given an universe $\UU$ of $3n$ elements and a collection \SS of $m$ subsets of \UU each containing $3$ elements, compute if there exists a sub-collection $\WW\subseteq\SS$ such that (i) $|\WW|=n$ and (ii) $\cup_{S\in\WW} S = \UU$. An instance $(\UU,\SS)$ of \XTC is called a \YES instance if there indeed exists such a \WW; otherwise it is called a \NO instance. \XTCC is the complement problem of \XTC: an instance  $(\UU,\SS)$ of \XTCC is a \YES instance if and only if $(\UU,\SS)$ is a \NO instance of \XTC.
\end{definition}

Since \XTC is \NPC, it follows that \XTCC is \coNPH. We use the following lemma in our proof.
\begin{lemma}\label{existence_lemma}
	Let $\AA = \BB\cup \Gamma$ be a set of alternatives, with $|\BB| = l$ and $(Z_{(a,b)}), a,b\in \BB, a\neq b$ be integers, all with the same parity, satisfying $Z_{(b,a)} = - Z_{(a,b)} \;\forall a,b\in \BB $. Let $\delta \geq 0$, a positive integer, be given. Further, suppose that $|\Gamma| \geq 10\delta \el\sum_{a,b\in \BB, a\neq b} |Z_{(a,b)}|. $ Then there exists a preference profile $\PP = (\suc_i)_{i \in [|\PP|]}$ on the set of alternatives $\AA$ satisfying
	\begin{enumerate}
		\item $\DD_\PP(a,b) = Z_{(a,b)} \;\forall\; a,b\in \BB, a\neq b $
		\item For any two alternatives $a,b \in \BB, a \neq b$ and any preference $\suc \in \PP$, $|\text{rank}(\suc,a) - \text{rank}(\suc,b)| > \delta $.
		\item For any two alternatives $b \in \BB, d\in \Gamma$, $d$ appears before $b$ in at most one preference in $\PP^\pr$  for all profiles $\PP^\pr = (\suc_i^\pr)_{i \in [|\PP|]} $  which satisfy $d_{KT}(\suc_i,\suc_i^\pr) \leq \delta\; \forall\; i \in [|\PP|]$
		% $\DD_\PP^\pr (b,d) > 0$ for all profiles $\PP^\pr = (\suc_i^\pr)_{i \in [|\PP|]} $  which satisfy $d_{KT}(\suc_i,\suc_i^\pr) \leq \delta\; \forall\; i \in [|\PP|] $.   
		\item The number of preferences in $\PP$ is bounded by a polynomial function of $\sum_{a,b\in B, a\neq b} |Z_{(a,b)}|$, and $\PP$ can be constructed in time polynomial in $m+\sum_{a,b\in B, a\neq b} |Z_{(a,b)}|$
	\end{enumerate}
\end{lemma}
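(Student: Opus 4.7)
The plan is to adapt McGarvey's classical margin-realization scheme (McGarvey 1953) to our two structural constraints: the $\delta+1$ separation between B-alternatives required by Condition 2, and the robust B-over-$\Gamma$ dominance required by Condition 3. I would fix a template of ``canonical preferences'' in which the B-alternatives occupy the designated B-positions $1,\delta+2,2\delta+3,\ldots,(\beta-1)(\delta+1)+1$ (where $\beta=|B|$) and the $\Gamma$-alternatives fill the remaining $\Gamma$-positions. Every canonical preference trivially satisfies Condition 2, since the gap between consecutive B-alternatives is exactly $\delta+1$.

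For the even-parity case, for each ordered pair $(x,y)$ with $Z_{(x,y)}>0$ I would add $Z_{(x,y)}/2$ canonical McGarvey pairs $(\pi_1,\pi_2)$. In $\pi_1$ the B-order is $x,y,z_1,\ldots,z_{\beta-2}$; in $\pi_2$ the B-order is $z_{\beta-2},\ldots,z_1,x,y$ (the $z_i$'s are reversed while $x$ is kept immediately before $y$). The $\Gamma$-alternatives are placed at the $\Gamma$-positions in some order $\rho$ in $\pi_1$ and in the mirror order in $\pi_2$ (the $\Gamma$-alternative at the $k$-th $\Gamma$-position in $\pi_1$ sits at the $(\gamma-k+1)$-th $\Gamma$-position in $\pi_2$). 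A direct case-check would show that each such pair contributes $+2$ to $D(x,y)$, cancels all other B-B and $\Gamma$-$\Gamma$ margins, and contributes either $0$ or $+2$ to every B-$\Gamma$ margin (the only combinatorial case giving a $-2$ contribution is shown to require $\gamma<(\beta-1)\delta$, which is ruled out by $|\Gamma|\geq 10\delta\el\sum|Z|$). For the odd-parity case, I would first insert a single baseline canonical preference $\sigma_0$ whose $\pm1$-contributions to the B-B pairs form an antisymmetric integer matrix $\epsilon_0$, then apply the even-case procedure to the matrix $Z-\epsilon_0$, which is antisymmetric with all-even entries. Summing the contributions over all pairs yields Condition 1, and the total number of preferences is $O(\sum|Z|+1)$, giving Condition 4.

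The hard part is Condition 3. The key step is to rotate the $\Gamma$-assignment $\rho$ across distinct McGarvey pairs in a round-robin fashion, so that each specific $\Gamma$-alternative $d$ occupies any given $\Gamma$-position in at most $\lceil|\PP|/\gamma\rceil$ preferences. For any pair $(b,d)\in B\times\Gamma$ and any $\delta$-perturbation $\PP'$, one has $D_{\PP'}(b,d)\geq D_\PP(b,d)-2k(b,d)$, where $k(b,d)$ is the number of preferences in $\PP$ in which $d$ occupies one of the at most $2\delta$ $\Gamma$-positions that are within Kendall-Tau flip-distance $\delta$ of $b$'s B-position in that preference. A pigeonhole argument driven by the hypothesis $|\Gamma|\geq 10\delta\el\sum|Z|$ gives $k(b,d)\leq 2\delta|\PP|/\gamma$, which is strictly less than $D_\PP(b,d)/2$ because the earlier structural analysis shows that $D_\PP(b,d)=\Omega(|\PP|)$ (each McGarvey pair contributes $0$ or $+2$, and most contribute $+2$). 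This yields $D_{\PP'}(b,d)>0$ as required. The technical heart of the proof lies in carefully orchestrating the $\Gamma$-rotation so that this pigeonhole bound holds simultaneously for every $(b,d)\in B\times\Gamma$; the factor of $10$ in the hypothesis is precisely what supplies the necessary slack in this balancing argument.
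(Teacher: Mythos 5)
The first thing to say is that the paper does not actually prove this lemma: its entire ``proof'' is the single line ``Follows from the proof of Lemma 13 in [Dey~2019].'' Your McGarvey-style construction --- canonical votes with the $\BB$-alternatives pinned at positions $1,\delta+2,2\delta+3,\dots$, margin realization by reversal pairs, a parity-correcting baseline vote, and a round-robin rotation of $\Gamma$ --- is precisely the kind of argument that citation hides, so you are not taking a different route from the paper; you are supplying the route the paper omits. Your handling of Conditions~1, 2 and 4 is sound: the reversal pairs cancel on all $\BB\times\BB$ pairs except the targeted one, the fixed slot template gives the $>\delta$ separation, and the profile size is $O(\sum|Z_{(a,b)}|)$.

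The gap is in Condition~3, and it is exactly at the step you flag as the ``technical heart.'' Your conclusion needs $k(b,d)<D_\PP(b,d)/2$ via ``$D_\PP(b,d)=\Omega(|\PP|)$ because most pairs contribute $+2$,'' but the construction does not secure this. A fixed $d\in\Gamma$ can be ranked above a fixed $b\in\BB$ only when $d$ occupies one of the up to $(|\BB|-1)\delta$ interleaved $\Gamma$-slots preceding $b$'s slot, so your occupancy bound of $\lceil|\PP|/|\Gamma|\rceil$ per position yields only $D_\PP(b,d)\ge|\PP|-2(|\BB|-1)\delta\lceil|\PP|/|\Gamma|\rceil$, which is $\Omega(|\PP|)$ only when $|\PP|$ dominates $|\BB|\delta$. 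Since $|\PP|=O(\sum|Z_{(a,b)}|)$ and the hypothesis lower-bounds $|\Gamma|$ in terms of $\delta$, $\el$ and $\sum|Z_{(a,b)}|$ but never mentions $|\BB|$, this can fail outright: take $|\BB|$ large and all $Z_{(a,b)}=0$, so your profile has $O(1)$ votes, and any $d$ placed in the slot adjacent to some $b$ is overtaken by a single swap, leaving $D_{\PP^\pr}(b,d)\le0$. The same objection applies to your claim that the $-2$ case in the B--$\Gamma$ bookkeeping ``requires $|\Gamma|<(|\BB|-1)\delta$, which is ruled out by the hypothesis'' --- the hypothesis rules out nothing involving $|\BB|$. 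The repair is standard but must be stated: pad $\PP$ with additional mutually cancelling canonical pairs until $|\PP|$ exceeds, with slack, the maximum over $(b,d)$ of the number of votes in which $d$ can sit above or within Kendall--Tau distance $\delta$ of $b$, balance the $\Gamma$-rotation against $|\PP|\cdot|\BB|\delta$ rather than $|\PP|$ alone, and check that the padded profile is still polynomial in size and compatible with whatever the intended reading of the $|\Gamma|\ge10\delta\el\sum|Z_{(a,b)}|$ hypothesis is. Without that padding and the accompanying count, Condition~3 is asserted rather than proved.
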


\begin{proof}
	Follows from the proof of Lemma 13 in \cite{Dey19}.
\end{proof}

The set of alternatives $\Gamma$, described above in \Cref{existence_lemma} can be thought of as dummy alternatives - they help us to `control' the results of pairwise elections of the alternatives in $\BB$, essentially fixing most outcomes of the pairwise elections. We then suitably specify other votes to ensure that only the outcomes of the pairwise elections among `the alternatives of interest' change suitably to ensure the correctness of the reduction. We now prove the hardness result.}

The idea is to use an alternative $x$, who will be the sole contender for the distinguished alternative $c$. We then add alternative $y_u$ corresponding to each element of the set $U$ in the \XTCC instance. We then ensure that $c$ co-wins only if it defeats each $y_u$, and $x$ is defeated by another specified alternative $z$ (this ensures that the corresponding exact cover obtained is of size $\frac{|U|}{3}$).

\begin{theorem}\label{thm:copeland}
The \DRSM problem is \coNPH for the Copeland$^{\alpha}$ voting rule for every $\alpha \in [0,1]$ even if we have only one manipulator and $\delta=3$ for every preference.
\end{theorem}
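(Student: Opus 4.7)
The plan is to reduce from the coNP-hard problem \XTCC. Given an instance $(\UU,\SS)$ with $|\UU|=3n$ and $\SS=\{S_1,\ldots,S_m\}$ where each $S_i$ has size $3$, I would construct an \SM instance over an alternative set $\AA=\{c\}\cup U\cup\Gamma$, where $U=\{u_j:j\in[3n]\}$ and $\Gamma$ is a padding set of polynomially many alternatives whose size is supplied by Lemma~\ref{existence_lemma}. The preference profile splits into two parts: (i) a rigid ``backbone'' constructed via Lemma~\ref{existence_lemma} with $\BB=\{c\}\cup U$, which pins down every pairwise margin among $\BB$ independently of any $\delta$-perturbation, forces each $\BB$-alternative to always beat each $\Gamma$-alternative, and separates $\BB$-alternatives within every backbone preference by more than $\delta=3$; and (ii) a collection of ``gadget'' preferences $\sigma_1,\ldots,\sigma_m$, one per subset, which are deliberately \emph{not} separated in this way and therefore can be reshuffled by the adversary.

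For $S_i=\{u_a,u_b,u_c\}$, I would place $c,u_a,u_b,u_c$ in four consecutive positions of $\sigma_i$ in that order, keep every other $\BB$-alternative at distance greater than $3$ from this cluster, and fill the remaining slots with $\Gamma$-alternatives. A single $\delta=3$ perturbation of $\sigma_i$ then pushes each of $u_a,u_b,u_c$ past $c$ in three adjacent swaps, simultaneously dropping $D(c,u_a), D(c,u_b), D(c,u_c)$ by $2$ each while leaving every other $\BB$-pair's margin intact. I would select the constants $Z_{(a,b)}$ fed to Lemma~\ref{existence_lemma} (with uniform odd parity and carefully chosen magnitudes) so that, before any perturbation, $c$ has Copeland$^\alpha$ score $|\AA|-1$ with every $D(c,u_j)$ small and positive, while each $u_j$ sits one win behind $c$, and so that a manipulator placing $c$ at the top of her vote makes $c$ co-win against the unperturbed profile.

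Under this calibration, the adversary can defeat the manipulation iff it can flip every one of the $3n$ contests $(c,u_j)$, which forces perturbing a sub-collection of $\sigma_i$'s whose corresponding $S_i$'s cover $\UU$. To promote mere coverage to the stronger \emph{exact} coverage demanded by \XTC, I would add a side-gadget: auxiliary ``counter'' alternatives attached to elements shared between subsets, whose Copeland scores rise whenever an element is covered more than once, with parameters set so any redundant cover already pushes a counter-alternative above $c$ (making the adversary's win free) while an exact cover leaves all counters below $c$. This yields the equivalence: a successful stable manipulation exists iff no exact 3-cover of $\UU$ exists, i.e., iff $(\UU,\SS)$ is a \YES instance of \XTCC.

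The main obstacle will be this calibration: ensuring the manipulator's single $+1$ contribution is exactly enough to defeat any perturbation that is not a full cover, so that only complete covers can threaten $c$, and arranging the side-gadget so that overlapping (non-exact) covers are always self-defeating. Two further technical points to handle are (a) the requirement in Lemma~\ref{existence_lemma} that $|\Gamma|$ grow with $\sum|Z_{(a,b)}|$, which constrains how large the margins can be while keeping the reduction polynomial, and (b) uniformity in $\alpha$: by fixing the parity of all $Z_{(a,b)}$ to be odd, every pairwise margin among $\BB$-alternatives becomes nonzero after the manipulator's vote is added, so no Copeland$^\alpha$ tie ever occurs among the relevant alternatives and the argument works unchanged for every $\alpha\in[0,1]$.
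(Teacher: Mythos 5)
Your starting point matches the paper's: reduce from \XTCC, use one gadget preference per set $S\in\SS$ in which $c$ sits immediately above the three element-alternatives of $S$ so that a single $\delta=3$ perturbation drops each of the three margins $\DD(c,\cdot)$ by $2$, and invoke \Cref{existence_lemma} to pin down all remaining margins robustly. But the score calibration that is supposed to drive the reduction does not work, and it is not a detail that can be postponed. First, the configuration ``$c$ beats everyone and every $u_j$ is one win behind'' is impossible: each $u_j$ would then lose only to $c$, yet the $u_j$'s play each other. More importantly, even after relaxing it, the claim that ``the adversary must flip all $3n$ contests $(c,u_j)$'' cannot be realized over the alternative set $\{c\}\cup U\cup\Gamma$: flipping a set $F$ of contests lowers $c$'s score by $|F|$ while raising each flipped $u_j$ by one, and since \Cref{existence_lemma} forces every $u_j$ to beat all of $\Gamma$, a $u_j$ can only lose inside $\{c\}\cup U$, so a counting argument shows some flipped $u_j$ overtakes $c$ long before all $3n$ contests are flipped --- with your stated calibration a single perturbed gadget vote already defeats the manipulation, and no covering structure is enforced. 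Second, your exactness gadget is backwards: in Copeland, pushing a counter-alternative above $c$ \emph{is} a win for the adversary, not a penalty, so rewarding redundant covers with a raised counter makes the adversary succeed whenever any cover exists, trivializing the reduction.

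The paper's construction supplies exactly the machinery you are missing. It adds a large block $\Lambda$ to which every $y_u$ robustly loses (so the element-alternatives can never win, no matter how many contests against $c$ they flip), a single designated rival $x$, and a budget alternative $z$ with $\DD_\PP(z,x)=2(m-n)-2$. Each of the $m$ gadget votes can be spent, within $\delta=3$, either on pushing $c$ below its set's three $y_u$'s or on pushing $x$ above $z$, but not both. The rival $x$ overtakes $c$ only if all $3n$ contests $(c,y_u)$ are flipped (costing at least $n$ votes, with equality precisely when the chosen sets form an exact cover) and additionally $x$ overtakes $z$ (costing at least $m-n$ further votes); since the total budget is exactly $m$, the adversary succeeds if and only if an exact cover exists. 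This conjunction-plus-budget mechanism replaces both your ``flip everything'' calibration and your counter-alternative gadget; without something equivalent, your reduction does not go through.
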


\longversion{
	
Before proving the theorem, we provide an intuitive overview of the proof. The idea is to use an alternative $x$, who will be the sole contender for the distinguished alternative $c$. We then add alternative $y_u$ corresponding to each element of the set $U$ in the \XTCC instance. We then ensure that $c$ co-wins only if it defeats each $y_u$, and $x$ is defeated by another specified alternative $z$. To ensure that $c$ and $x$ are the only possible Copeland winners, and also to ensure other unintended swaps do not change any outcome, we add dummy alternatives.

In the course of the proof, we use two sets of vote profiles, $\PP_1$ and $\PP_2$, with $\PP = (\PP_1,\PP_2)$. Each preference of $\PP_1$ is explicitly constructed. The preferences in $\PP_2$ are not explicitly specified. Instead, we specify $D_\PP(x,y)$ for each pair of alternatives $x,y, x\ne y$. Notice that $D_\PP(x,y) = D_{\PP_1}(x,y) + D_{\PP_2}(x,y)$, fixing $D_\PP(x,y)$ thus fixes $D_{\PP_2}(x,y)$. We then invoke \Cref{existence_lemma} to guarantee the existence of such a profile $\PP_2$.

\begin{proof}
We now prove co-NP hardness. We reduce from \XTCC. Let $(\UU = \{u_i: i\in[3n]\}, \SS=\{S_j:j\in[m]\})$ be an arbitrary instance of \XTCC. We consider the following instance $(\AA,\PP=(\PP_1,\PP_2),c,(\delta_i=3)_{i\in[n]},\el=1)$ of \DRSM. 
\begin{align*}
	\AA &= \BB \cup \Lambda\cup \Gamma  \text{ where }\\
	\BB &= \{c,x,z\}\cup\{y_u: u\in\UU\}, |\Lambda|=100n, |\Gamma| = K m^3n^3\\
	\PP_1 &= \{c\suc \{y_u: u\in S\}\suc d_1\suc d_2\suc d_3\suc\\
	&z\suc d_4\suc d_5\suc x\suc\text{others}\\ &\text{for some } d_i\in\Gamma, i\in[5]: \forall S\in\SS\}
\end{align*}

The constant $K$ is chosen suitably to satisfy the conditions of the Lemma.
While adding preferences in $\PP_1$ we use a new set of dummy candidates $d_1$ through $d_5$ for every vote.
Also, in the preferences above, whenever we say `others', the unspecified alternatives
are assumed to be arranged in such a way that, for every unspecified alternative $a\in\AA\setminus\Gamma$, the three alternatives to both the immediate left and right of $a$ are all from $\Gamma$. 
% We also ensure that any alternative in $\Gamma$ appears within top $10mn$ positions at most once in $\PP_1\cup\PP_2$ whereas every alternative in $\AA\setminus\Gamma$ appears within top $10mn$ position in every preference in $\PP_1$. This is possible because $|\Gamma| = 10 m^3n^3$ and $|\AA\setminus\Gamma| = 103n + 3$.
 We now add a preference profile $\PP_2$ such that we have the following.

\begin{itemize}
	\item  $\DD_\PP(c,y_u)=\DD_\PP(x,y_u)=0$ for every $u\in\UU$
	
	\item $\DD_\PP(z,c)= 0$
	
	\item $\DD_\PP(y_u,z) = 8m$ for every $u\in\UU$
	
	\item $\DD_\PP(z,x)=2(m-n)-2$

	\item $\DD_\PP(x,c) = 8m$
\end{itemize}

In \PP, we further ensure the following. Every alternative in $\Lambda$ gets defeated by at least $\frac{1}{3}|\Lambda|$ of the alternatives from $\Lambda$ in pairwise elections by a margin of $8m$. Every alternative in $\{c,x\}$ defeats every alternative in $\Lambda$ in pairwise elections by a margin of $8m$. Every alternative in $\Lambda$ defeats every alternative in $\{y_u:u\in\UU\}\cup\{z\}$ in pairwise elections by a margin of $8m$. 
%Every alternative in $\AA\setminus\Gamma$ defeats every alternative in $\Gamma$ in pairwise elections by a margin of $6m$. 
Such a profile $\PP_2$ (and thus \PP) exists due to \Cref{existence_lemma} (applied as $\BB=\AA\setminus\Gamma$), for a suitable choice of $K$ (We can always choose such a $K$, since $D_{\PP_2}(x,y)$ is $\OO(m + n)$, for each pair of alternatives $(x,y) \in \AA \setminus \Gamma$). Further, \Cref{existence_lemma} guarantees that for every alternative $a$ in $\AA\setminus\Gamma$, the 3 alternatives to both the immediate left and right of $a$ are all from $\Gamma$, for each preference in $\PP_2$. Property (3) in \Cref{existence_lemma} ensures that no alternative from  $\Gamma$  wins for any choice of manipulator vote. Thus, it suffices to consider only alternatives in $\AA \setminus \Gamma$ as candidate Copeland winners. We now claim that the two instances are equivalent.

In one direction, suppose the \XTCC instance is a \YES instance. Then there does not exists an exact cover $\WW\subseteq\SS$ for \UU, with $|\WW| = n$. We claim that the manipulator's vote $\suc_M=c\suc\text{others}\suc x$ results in a successful manipulation. To see this, let \QQ be any $n$-voters profile such that the Kendall Tau distance between the $i$-th preferences of \PP and \QQ is at most $3$. We first observe that, since every alternative in $\Lambda\cup\{y_u:u\in\UU\}\cup\{z\}$ gets defeated by at least $\frac{1}{3}|\Lambda|$ of the alternatives from $\Lambda$ and, for every unspecified alternative $a\in\AA\setminus\Gamma$, the 3 alternatives to both the immediate left and right of $a$ are all from $\Gamma$ and $\delta=3$, no alternative in $\AA\setminus\{c,x\}$ wins in \QQ. Let us define $\WW$ to be the set of $S\in\SS$ such that in the corresponding preference in \QQ, $c$ does not appear at the first position.  Notice that $c$ defeats $z$ in $(\QQ,\suc_M)$. If $|\WW|> n$, then the alternative $z$ defeats $x$ in $(\QQ,\suc_M)$, and consequently  $c$ is a co-winner in $(\QQ,\suc_M)$. On the other hand, if $|\WW|\leq n$, then \WW is not an exact set cover for \UU, there exists an element $u\in\UU$ such that $c$ defeats the alternative $y_u$ in $(\QQ,\suc_M)$ and thus $c$ is a co-winner in $(\QQ,\suc_M)$.

For the other direction, suppose the \XTCC instance is a \NO instance. Let $\WW\subset\SS$ form an exact set cover for \UU. Let us consider the $n$-voters preference profile obtained from \PP as: for every $S\in\WW$, in the corresponding preference, we shift $c$ to right by $3$ positions; for every $S\in\SS\setminus\WW$, in the corresponding preference, we shift $x$ to left by $3$ positions. It follows that, irrespective of the manipulator's preference $\suc_M\in\LL(\AA)$, every alternative in $\{y_u: u\in\UU\}$ defeats $c$ and $x$ defeats $z$. Hence $x$ defeats $1$ more alternative than $c$ and thus the \DRSM instance is a \NO instance.
\end{proof}
}

\subsection{Results for \DRWM}

\longversion{
We make use of a version of a related problem from \cite{Dey19} while proving our results in this section:
\begin{definition}
\LB  Given a set $\AA$ of alternatives, a profile $\PP = (\suc_i{i \in [n]})$ consisting of $n$ voters,  a positive integer $\delta$, compute if there exists a vote profile $\QQ = (\suc_i^\pr{i \in [n]})$ where $d_{KT}(\suc_i,\suc_i^\pr) \leq \delta$, for each $i \in [n]$, so that $c$ is a co-winner in $\QQ$.   

\end{definition}
}
We reduce the problem for plurality rule to a bribery problem in \cite{Dey19}, which uses a maximum flow reduction.
\begin{theorem} \label{thm:plurality_pm}
\DRWM is polynomial-time solvable for the plurality rule.
\end{theorem}
\longversion{

\begin{proof}
Let $(\AA,\PP=(\suc_i)_{i\in[n]},c,(\delta_i)_{i\in[n]},\el)$ be an arbitrary instance of \DRWM. Pick an arbitrary manipulators' profile $\suc_M$ that places $c$ in the first place in each vote, and let $\QQ = \PP \cup \{\suc_M\}$. We now use the algorithm for the plurality rule for \LB for $\QQ$, with $\delta = 0$ for each manipulator's vote added above. Since it is known that \LB is poly-time solvable for the plurality rule even when each vote $i$ has a different distance parameter $\delta_i$, \DRWM is polynomial-time solvable for the plurality rule. 

\end{proof}
}

Clearly, the \DRWM problem is in \NP for all common voting rules. Thus, in all our hardness results we prove only \NP-hardness.

\begin{observation}
If \MAN is \NPH for a voting rule $r$ with $l$ manipulators, then \DRWM is \NPH for $r$ with a single manipulator. 
\end{observation}
\begin{proof}
Let $(\AA,\PP=(\suc_i)_{i\in[n]},c,\el)$ be an arbitrary instance of \MAN. We reduce to an instance of \DRWM as follows: We add $l-1$ votes with $\delta = {m \choose 2}$ to $\PP$ to obtain a new profile $\QQ$. Then the reduced instance of \DRWM is $(\AA,\QQ,c,(\delta_i)_{i \in [n + l - 1]},1)$, where $\delta_i = 0$ for $i \in [n]$, and $\delta_i = {m \choose 2}$, for $i \in \{n+1, n+2 \ldots n+l-1\}$. It is immediate that the two instances are equivalent.

\end{proof}

For the $k$-approval rule, we reduce from an unbudgeted bribery problem in \cite{Dey19}. We add sufficiently many dummy candidates, and add further votes so that in the reduced instance, the score of every alternative including the manipulator's vote increases by exactly $1$, so that the instances are equivalent.
\begin{theorem}\label{thm:kapp_pm}
\DRWM is \NPC for the k-approval rule, for even a single manipulator, for any constant $k \geq 2$, even with $\delta = 2$ for each vote.
\end{theorem}
\longversion{
\begin{proof}
We prove hardness by reduction from \LB, which is known to be \NPC even when $\delta = 2$ for each vote.
Let $(\AA,\PP=(\suc_i)_{i\in[n]},c,(2)_{i\in[n]})$ be instance of \LB. Let $\Gamma = \{d_{ai}\; |\;a \in \AA , i \in [k+1]\}$ and $\BB = \AA \cup \Gamma $. We construct the vote profile $\QQ$, by appending to the end of each preference in $\PP$, the candidates from $\Gamma$ in arbitrary order. We then further add the following preferences in $Q$ - for each $a \in \AA\setminus\{c\}$, we add an arbitrary completion of the preference, $a \suc d_{a1} \suc d_{a2} \ldots \suc d_{a(k+1)}$. The instance of \DRWM is then $(\BB,\QQ,c,(\delta_i = 2)_{i\in[n + m - 1]},1)$. To prove correctness, we notice that the candidates $\{d_{ci}\; |\; i \in [k+1]\}$, receive zero score in $\QQ$, irrespective of any allowed swaps. Thus we may assume without loss of generality that the manipulator's vote is $c \suc \{d_{ci}\; |\; i \in [k+1]\} \suc others$. We observe that no candidate in $\Gamma$ gets a $k$-approval score of more than 1. Since the added votes together with the manipulator's vote increase the score of every candidate from $\AA$ by exactly 1(even after swapping alternatives within the distance limit), it follows that the two instances are equivalent.

\end{proof}
}

For the Copeland rule, we use a reduction from \XTC similar to that in \Cref{thm:copeland}. 
\begin{theorem}\label{thm:copeland_pm}
\DRWM is \NPC for the Copeland$^{\alpha}$ rule for every $\alpha \in [0,1]$ even with $\delta = 3$ for each vote, for a single manipulator.
\end{theorem}
\longversion{
\begin{proof}
The proof of this result involves a construction very similar to that presented in \Cref{thm:copeland}. We only provide the changes in the construction and omit the (very similar) proof of correctness. This time, however, we shall reduce from \XTC, instead of its complement.
We now let $\PP_1 = \{\{y_u: u\in S\} \suc c \suc d_1\suc d_2\suc d_3\suc
	x\suc d_4\suc d_5\suc z\suc\text{others} ,\text{for some } d_i\in \Gamma, i\in[5]: \forall S\in\SS\}$.
We now ensure the following weights in the majority graph of the overall profile $\PP$:

\begin{itemize}
	\item  $\DD_\PP(c,y_u)= -2,\DD_\PP(x,y_u) = 8m$ for every $u\in\UU$
	
	\item $\DD_\PP(z,c) = 0$
	
	\item $\DD_\PP(y_u,z) = 8m$ for every $u\in\UU$
	
	\item $\DD_\PP(x,z) = 2(m-n)$

	\item $\DD_\PP(x,c) = 8m$
\end{itemize}

The rest of the construction remains the same as in \Cref{thm:copeland}. It follows that $c$ co-wins in the \DRWM instance if and only if there is an exact cover of size $n$.

\end{proof}
}

The result for Maximin rule uses a reduction from \XTC. This time, we add two alternatives $w$, $x$, as potential competitors for the alternative $c$. The alternative $w$ ensures that $c$ can win only if it is swapped ahead of each alternative corresponding to the universe of the exact cover instance atleast once, and the alternative $x$ ensures that this must happen by changing only $\frac{|U|}{3}$ votes, giving an exact cover of size $\frac{|U|}{3}$.
\begin{theorem}\label{thm:maximin_pm}
\DRWM is \NPC for the Maximin voting rule, even with $\delta = 3$ for every vote, for a single manipulator.
\end{theorem}

\longversion{
\begin{proof}
We reduce from \XTC. Let $(\UU = \{u_i: i\in[3n]\}, \SS=\{S_j:j\in[m]\})$ be an arbitrary instance of \XTC. We consider the following instance $(\AA,\PP=(\PP_1,\PP_2),c,(\delta_i=3)_{i\in[n]},\el=1)$ of \DRWM. 
\begin{align*}
	\AA &= \BB \cup \Lambda\cup \Gamma  \text{ where }\\
	\BB &= \{c,x,w,z\}\cup\{y_u: u\in\UU\}, |\Lambda|=100n, |\Gamma| = K m^3n^3\\
	\PP_1 &= \{\{y_u: u\in S\} \suc c \suc d_1\suc d_2\suc d_3\suc\\
	&x\suc d_4\suc d_5\suc z\suc d_6 \suc d_7 \suc d_8 \suc w \suc \text{others}\\ &\text{for some } d_i\in\Gamma, i\in[8]: \forall S\in\SS\}
\end{align*}
As in \Cref{thm:copeland}, we create a vote profile $\PP_2$ satisfying the conditions of \Cref{existence_lemma}, so that the combined vote profile $\PP = \PP_1 \cup \PP_2$, satisfies the folowing:

\begin{itemize}
	\item  $\DD_\PP(c,y_u) = -6, \DD_\PP(x,y_u) = 8m, \DD_\PP(w,y_u) = 8m $ for every $u\in\UU$
	
	\item $\DD_\PP(z,c)= 0, \DD_\PP(w,c) = -2, \DD_\PP(x,c) = 0$
	
	\item $\DD_\PP(y_u,z) = \DD_\PP(w,z) = 8m, \DD_\PP(c,z) = -4$ for every $u\in\UU$
	
	\item $\DD_\PP(x,z) = 2(m-n) - 2, \DD_\PP(w,x) = 0$ 

\end{itemize}

In \PP, we further ensure the following. Every alternative in $\Lambda$ gets defeated by at least $\frac{1}{3}|\Lambda|$ of the alternatives from $\Lambda$ in pairwise elections by a margin of $8m$. Every alternative in $\{c,x,w\}$ defeats every alternative in $\Lambda$ in pairwise elections by a margin of $8m$. Every alternative in $\Lambda$ defeats every alternative in $\{y_u:u\in\UU\}\cup\{z\}$ in pairwise elections by a margin of $8m$. We ensure that no candidate from $\Gamma$ wins for any choice of vote profile within the distance requirements, and any choice of manipulator vote.

By construction, we observe that no candidate in $\AA \setminus \{c,x,w\}$ wins in $(\QQ,\suc_M)$, for any profile $\QQ$, which meets the distance restriction with respect to the profile $\PP$.

We now show that the two instances are equivalent. Suppose that the \DRWM instance is a \YES-instance. Let $\QQ$ be any profile that meets the distance criterion with respect to $\PP$ and $\suc_M$ be a successful manipulator vote. We may assume without loss of generality that $\suc_M$ places $c$ at the first position. Now we observe that the candidate $w$ has a maximin score of at least $-3$ in $(\QQ,\suc_M)$. Since the manipulator's vote has $c$ at the first position, in order for $c$ to co-win in $(\QQ, \suc_M)$ it follows that we must have $D_{\QQ}(c,y_u) \geq -4$, for each $u \in U$. Let $W$ be the set of $S \in \SS$, so that in the corresponding preference in $\QQ$, $c$ appears within the first three positions(i.e. $c$ is shifted by atleast one place to the left with respect to the corresponding vote in $\PP_1$). Since $D_{\PP}(c,y_u) = -6$, $c$ must be swapped ahead of each $y_u$ atleast once in $\QQ$. It follows that $W$ must constitute an exact cover $W$ of $U$. It remains to show that $|W| = n$. Assume to the contrary that $|W| > n$. We then have that the candidate $x$ loses against $z$ by a margin of at most $1$ in $(\QQ,\suc_M)$. However, in that case, the maximin score of $x$ in $(\QQ,\suc_M)$ is at least $-1$, but the maximin score of $c$ in $(\QQ,\suc_M)$ is at most $-3$ (since $D_Q(c,z) = -4$), which contradicts the fact that $c$ co-wins in $(\QQ,\suc_M)$ . Thus the $X3C$ instance is a \YES-instance.

Now suppose conversely that the X3C instance is a \YES instance, and let $W$ be an exact cover of size $n$. We move $c$ three positions to the left in each vote corresponding to the sets in $W$ and shift $z$ to the left by three positions in the remaining $m - n$ votes in $\PP_1$. Let us call the obtained profile $\QQ$. Consider the manipulator vote $\suc_M = c \suc others \suc w \suc x$. Observe that $D_{(\QQ, \suc_M)}(x, z) = -3$. The maximin score of $w,x,c$ are all $-3$, and thus $c$ co-wins in the profile $(\QQ,\suc_M)$.

\end{proof}
}

% s
We now prove our result for the Borda rule. We make use of the \PS problem, which is known to be \NPC~\cite{yu2004minimizing}. We remark that a reduction from \PS was first used to prove that Borda manipulation is \NPC for two or more manipulators~\cite{davies2014complexity}.

\begin{definition}{(\PS)}
Given a set of $l$ positive integers $x_1,x_2 \ldots x_l$, with $\sum_{i \in [l]}{x_i} = l(l+1)$, do there exist two permutations $\sigma: [l] \rightarrow [l]$ and $\pi:[l] \rightarrow [l]$, so that $x_i = \sigma(i) + \pi(i)$, for each $i \in [l]$ ?

\end{definition}
For ease of presentation, in this section, we define the Borda score of a candidate as one plus the number of candidates it ranks above.

\begin{theorem}\label{thm:borda_pm}
\DRWM is \NPC for the Borda rule even with $\delta = 1$ for each vote.
\end{theorem}

\begin{proof}
We reduce from the \PS Problem. Let $(x_1, x_2, x_3 \ldots x_l)$ be an instance of the \PS Problem. We construct an instance of \DRWM $(\AA,\PP = \PP_1 \cup \PP_2,c,(\delta_i = 1)_{i \in [|\PP|]},1)$ , with the set of alternatives $\AA = \CC \cup \{x,y\}$, where $\CC = \{a_i \;|\; i \in [l]\} \cup \\ \{b_i \;| \;i \in [l]\}$. We construct the vote profile $\PP_1$ with the following preferences:
\begin{itemize}
\item $\frac{l(l+1)}{2}$ copies of $c \suc b_1 \suc a_1 \suc b_2 \suc a_2 \ldots b_l \suc a_l \suc x \suc y$
\end{itemize} 
Let us denote the  Borda score of any alternative $t$ in $\PP_1$ as $s_1(t)$. We now construct the profile $\PP_2$ with the following votes for each $i \in [l]$: (\textit{others} refers to all the unspecified alternatives in the preference order):

\begin{itemize}
	\item $s_1(c) + 2l + 2 - s_1(b_i)$ copies of $\textit{others} \suc b_i \suc x \suc y \suc c $
	\item $s_1(c) + 2l + 2 - s_1(b_i)$ copies of $  \ x \suc b_i \suc c \suc y \suc \textit{others}$, where the ordering of \textit{others} is reversed in the second set with respect to the first set.

	\item $s_1(c) + l + 2 - x_i - s_1(a_i)$ copies of $\textit{others} \suc a_i \suc x \suc y \suc c$
	\item $s_1(c) + l + 2 - x_i - s_1(a_i)$ copies of $ x \suc a_i \suc c \suc y \suc \textit{others}$, where the ordering of \textit{others} is reversed in the second set with respect to the first set.
\end{itemize}
To ensure that $x,y$ do not win irrespective of the manipulator vote we add sufficient copies of the following votes in $\PP_2$ (only polynomially many votes need to be added):
\begin{itemize}
	\item $a  \suc b \suc c \suc d \suc \textit{others} \suc x \suc y$ 
	\item $\textit{others} \suc d \suc b \suc a \suc c  \suc x \suc y$, for some $a, b, d \in \CC \setminus \{c\}$, where the ordering of \textit{others} is reversed in the second set with respect to the first.
\end{itemize}
We observe the following:
\begin{itemize}
	\item Since $x$ and $y$ stand no chance of winning, we may assume without loss of generality that the manipulator's vote is $c \suc \{x,y\} \suc \CC$, for some ordering of the set $\CC$.
	\item Let $s_{\QQ}(a)$ denote the score of alternative $a$ in $\QQ$, where $\QQ$ is obtained from $\PP$ by moving $c$ to the left by one position in all the votes of $\PP_2$.\\ Then $s_{\QQ}(b_i) = s_{\QQ}(c) + 2l + 2$, and $s_{\QQ}(a_i) = s_{\QQ}(c) + l + 2 - x_i$.  
\end{itemize}
 
In one direction, suppose the \PS instance is a \YES instance. Let $\sigma$ and $\pi$ be the permutations satisfying the input instance. In the profile $\PP_1$ we shift $b_i$ to the right of $a_i$ by one position in  $\sigma(i)$ many votes, for each $i \in [l]$ (notice that this requires using all the votes in $\PP_1$). In each vote of $\PP_2$ we shift $c$ to the left by one position. We then construct the manipulator's vote $\suc_M$ as $c \suc \{x,y\} \suc a_{\pi^{-1}(l)} \suc a_{\pi^{-1}({l-1})} \suc \ldots a_{\pi^{-1}(1)} \suc b_{\sigma^{-1}(l)} \suc b_{\sigma^{-1}(l-1)} \suc \ldots \suc b_{\sigma^{-1}(1)}$. Let $s(a)$ denote the total Borda score of alternative $a$ in the modified profile, including the manipulator's vote. We then have $s(b_i) = s(c)$, $s(a_i) = s(c)$ for each $i \in [l]$, and $c$ co-wins in the \DRWM instance.

Conversely, suppose the \DRWM instance is a \YES instance, and $\suc_M$ be a successful manipulation of the form $c \suc \{x,y\} \suc \CC$. We assume without loss of generality that $c$ is moved to the left by one position in each preference of $\PP_2$. Every candidate $b_i$ receives a score of $j_i$ from $\suc_M$, where $j_i$ is its position from the right in $\suc_M$. The construction then requires that for every $i \in [l]$, $b_i$'s score must reduce by at least $j_i$ in the votes corresponding to $\PP_1$ to ensure that $c$ co-wins. However, the total number of votes in $\PP_1$ is $\frac{l(l+1)}{2} \leq \sum_{i \in [l]} j_i$, and thus it follows that each $b_i$ must appear in the last $l$ positions from the right in the manipulator's vote. Furthermore, each alternative $b_i$ must have moved right by one position in exactly $j_i$ preferences of the profile $\PP_1$. Thus the score of each candidate $a_i$ is enhanced by $j_i$. Now suppose that the position of the candidate $a_i$ in $\suc_M$ is $k_i + l$ from the right for some $k_i  \in [l]$. Since $c$ wins in the \DRWM instance for the given manipulator vote, we must have $k_i + j_i \leq x_i$. However, $\sum_{i \in [l]}{x_i} = l(l+1)$ and $\sum_{i \in [l]}{j_i} = \frac{l(l+1)}{2}$, $\sum_{i \in [l]} k_i = \frac{l(l+1)}{2}$ which gives $k_i + j_i =  x_i$, for each $i \in [l]$. It follows that the permutations $\sigma$ and $\pi$, given by $\sigma(i) = j_i$, and $\pi(i) = k_i$, satisfy the \PS instance.
% \begin{theorem}\label{thm:borda_pm}
% \DRWM is \NPC for the Borda rule, for a single manipulator.
% \end{theorem}
\end{proof}
For the simplified Bucklin and Bucklin rules, we modify the reductions from $(3,B2)$-SAT in Theorems 16 and 18 of \cite{Dey19}. We add a single vote that ensures, along with the manipulator's vote, the score of each alternative increases by exactly $1$ in the first few positions. Since the number of votes needed to obtain a majority is now one more than in the original instance, the rest of the analysis remains the same. 
\begin{theorem}\label{thm:sb_pm}
The \DRWM problem is \NPC for the Simplified Bucklin rule, even with $\delta = 1$, for each vote, for a single manipulator.

\end{theorem}

\longversion{
\begin{proof}
The proof of this result is similar to that of Theorem 16 in \cite{Dey19} to which we refer. Consider the reduced \LB instance. Let $\BB = \AA \setminus \DD$. We add the following vote $\suc_s$ to the constructed vote profile $\PP_1$:
\begin{itemize}
	\item $\BB \setminus \{c\} \suc \text{others} \suc c$
\end{itemize}

Since, by construction, we ensure no candidate from $\DD$ wins, we may assume without loss of generality that the manipulator's vote $\suc_M$ is of the form $c \suc \DD \suc \text{others}$. Since $|\AA \setminus \DD| = 5n + m + 2$, we notice that due to the votes $\suc_s$, and $\suc_M$, each candidate in $\AA$ appears exactly once more in the top $mn$ positions. Furthermore, since the total number of votes(including the manipulator's vote) has increased by $2$, the number of votes needed to obtain a majority has increased by exactly $1$. This finishes the proof since the rest of the analysis remains identical to that in \cite{Dey19}.        
\end{proof}
}

\begin{theorem}\label{thm:bucklin_pm}
The \DRWM problem is \NPC for the Bucklin rule, even with $\delta = 1$, for each vote, for a single manipulator.

\end{theorem}

\longversion{
\begin{proof}
The proof of this result is similar to Theorem 18 of \cite{Dey19} to which we refer. Consider the reduced \LB instance. Let $\BB = \AA \setminus \DD$. We then add the following vote $\suc_s$ to the constructed vote profile $\PP_1$	:
\begin{itemize}
	\item $\BB \setminus \{c\} \suc \text{others} \suc c$
\end{itemize}

Since, by construction, we ensure no candidate from $\DD$ wins, we may assume without loss of generality that the manipulator's vote $\suc_M$ is of the form $c \suc \DD \suc \text{others}$. Since $|\AA \setminus \DD| = 3n + 2m + 1$, we notice that due to the votes $\suc_s$, and $\suc_M$, each candidate in $\AA$ appears exactly once more in the top $10(m+n)$ positions. Furthermore, since the total number of votes(including the manipulator's vote) has increased by $2$, the number of votes needed to obtain a majority has increased by exactly $1$. This finishes the proof since the rest of the analysis remains identical to that in \cite{Dey19}.        
\end{proof}
}

We now present a general result.

\begin{theorem}\label{thm:poly}
Both \DRSM and \DRWM are poly-time solvable for any poly-time computable anonymous voting rule if the number of alternatives is $\OO(1)$.
\end{theorem}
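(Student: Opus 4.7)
The plan is to exploit the fact that when $m = \OO(1)$, the space $\LL(\AA)$ of preferences has constant size $m!$, and the maximum possible Kendall-Tau distance $\binom{m}{2}$ is also a constant. Combined with anonymity of the voting rule, this will let us reduce everything to counting vectors (histograms) of preferences, whose number turns out to be polynomial.

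First I would observe that, because the voting rule $r$ is anonymous, its winner set depends on a profile only through the count vector $(n_\pi)_{\pi\in\LL(\AA)}\in\NB^{m!}$ recording how many voters cast each preference. Given the input instance, I would group the $n$ non-manipulators into \emph{types}: two voters $i,i'$ belong to the same type if $\suc_i=\suc_{i'}$ and $\delta_i=\delta_{i'}$. Since both $\suc_i$ and $\delta_i\in\{0,1,\ldots,\binom{m}{2}\}$ take only constantly many values, the number of distinct types $t$ is bounded by $m!\cdot(\binom{m}{2}+1)=\OO(1)$. For each type $j$, let $N_j$ be the number of voters of that type and let $D_j\subseteq\LL(\AA)$ be the set of preferences at Kendall-Tau distance at most $\delta_j$ from the believed preference of the type; note $|D_j|\le m!=\OO(1)$.

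Next, I would enumerate all \emph{achievable} non-manipulator count vectors. Any profile $\QQ$ respecting the distance bounds distributes the $N_j$ voters of type $j$ among the preferences in $D_j$; the number of such distributions for type $j$ is $\binom{N_j+|D_j|-1}{|D_j|-1}=\OO(N_j^{|D_j|-1})$, which is polynomial in $n$ because $|D_j|$ is constant. Aggregating over the constantly many types, the total number of achievable non-manipulator count vectors is $\prod_{j} \OO(N_j^{|D_j|-1})=n^{\OO(1)}$, and the list can be enumerated in polynomial time. Analogously, a profile of the $\el$ manipulators is, up to anonymity, specified by a count vector in $\NB^{m!}$ summing to $\el$; the number of such vectors is $\binom{\el+m!-1}{m!-1}=\OO(\el^{m!-1})=\el^{\OO(1)}$, which is again polynomial.

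The algorithm then iterates over every manipulator count vector $M$: for each $M$, iterate over every achievable non-manipulator count vector $C$, form the combined count vector $C+M$, reconstruct any profile realizing it, and use the polynomial-time winner-determination algorithm for $r$ to test whether $c\in r(C+M)$. Output \YES iff some $M$ passes the check for all $C$. Correctness is immediate from anonymity and the definition of \SM. The running time is a product of three polynomial quantities, hence polynomial. The main obstacle in writing this up cleanly will just be bookkeeping the type decomposition and verifying that the counts $N_j^{|D_j|-1}$ really multiply out to $n^{\OO(1)}$, which holds because both the number of types and each exponent $|D_j|-1$ are bounded by constants depending only on $m$.
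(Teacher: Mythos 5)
Your proposal is correct, and its overall architecture matches the paper's: use anonymity to collapse profiles to count vectors over $\LL(\AA)$, note that the number of such vectors is polynomial when $m=\OO(1)$, enumerate the manipulators' anonymous profiles, and for each one check that $c$ wins against every reachable non-manipulator profile. The one place where you genuinely diverge is in how the set of \emph{reachable} non-manipulator count vectors is handled. The paper enumerates \emph{all} $\binom{n+m!-1}{m!-1}$ anonymous count vectors $\QQ$ and then decides feasibility of each one (``can $\PP$ be perturbed into $\QQ$ within the budgets $\delta_i$?'') by solving a bipartite maximum-flow instance with voters on one side and preferences on the other. You instead generate the reachable vectors constructively: you group voters into $\OO(1)$ types by the pair $(\suc_i,\delta_i)$, observe that voters of a type are interchangeable and each may move to any of the $|D_j|\le m!$ preferences in its ball, and enumerate the $\binom{N_j+|D_j|-1}{|D_j|-1}$ distributions per type. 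Both are sound and polynomial; your route avoids the flow computation entirely at the cost of the (easy) observation that the reachable set is exactly the Minkowski sum of the per-type distribution sets, while the paper's flow-based test would also work in settings where voters are not so cleanly interchangeable. Your bound $\prod_j \OO(N_j^{|D_j|-1})=n^{\OO(1)}$ is valid because both the number of types and each exponent depend only on $m$.
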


\longversion{

\begin{proof}
 Let $(\AA,\PP=(\suc_i)_{i\in[n]},c,(\delta_i)_{i\in[n]},\el)$ be an arbitrary instance of \DRSM. Since the voting rule is anonymous, any preference profile can equivalently be described by the number $n_\suc$ of times a preference $\suc\in\LL(\AA)$ appears in the profile. Hence the number of different anonymous preference profiles is $\binom{n + m! -1}{m!-1} = O((n+m!-1)^{m!-1}) = O(n^{\OO(1)})$. We check, for every possible manipulators' anonymous preference profile $\suc_M$ (there are only $\binom{\el + m! -1}{m!-1} = O((\el+m!-1)^{m!-1}) = O(\el^{\OO(1)})$ such preference profiles), if there exists an $n$-voters anonymous preference profile \QQ (by iterating over all possible anonymous preference profiles) where (i) $c$ does not win in $(\QQ,\suc_M)$ and (ii) the input preference profile \PP can be modified into the anonymous preference profile \QQ. We reduce the problem in (ii) above into a maximum flow problem as follows. In the bipartite graph $\GG=(\VV,\EE,s,t,c:\EE\longrightarrow\NB_{\ge1})$ with
 \begin{align*}
 	\VV &= \VV_1 \cup \VV_2, \text{ where }\\
 	\VV_1 &= \{u_1,u_2,\ldots,u_{n}\}\\
 	\VV_2 &= \{v_{\suc}:\suc\in\LL(\AA)\}\\
 	\EE &= \{(u_i, v_\suc) : i\in [n], \suc\in\LL(\AA),d_{KT}(\suc_i,\suc) \leq \delta_i\}\\
 	&\cup \{(s,u_i): i\in[n]\} \cup \{(v_\suc,t): \suc\in\LL(\AA)\}
 \end{align*}
 The capacity of every edge from $\VV_1$ to $\VV_2$ and from $s$ to $\VV_1$ is $1$. If a preference $\suc\in\LL(\AA)$ appears in \QQ $n_\suc$ number of times, we define the capacity of the edge $(v_\suc,t)$ to be $n_\suc$. It follows that \QQ satisfies the second condition if and only if there is a flow of value $n$ in the above flow network.
 To prove the result for \DRWM, we simply modify condition (i) in above to check that $c$ co-wins in $(\QQ,\suc_M)$.
\end{proof}
 }

% \longversion{
% 	\begin{figure*}[!htbp]
% 		\centering
% 		\begin{minipage}[b]{.4\textwidth}
% 			\includegraphics[width=\linewidth]{result_plot_stv6_alternatives.png}
% 			\caption{Results for $m = 6$, $n = 4$}\label{fig:m6n4}
% 		\end{minipage}\qquad
% 		\begin{minipage}[b]{.4\textwidth}
% 			\includegraphics[width=\linewidth]{result_plot_plurality_changed20_alternatives.png}
% 			\caption{Results for $m = 20$, $ n = 30$}\label{fig:m20n30}
% 		\end{minipage}
% 	\end{figure*}
% }

%\input{experiments}
\section{Discussion and Future Work}
We have proposed a new model of incorporating manipulators' uncertainty about the non-manipulators' preferences through distance restrictions. Intuitively, in our model, the manipulators' uncertainty is distributed over the whole preference of other voters unlike many existing models in literature. We have studied the complexity of manipulation under two settings - one where we demand a robust manipulating vote, and another where we investigate the possibility of manipulation.

We remark that the information we consider available to the manipulator may be easy for the manipulator to obtain in certain scenarios. For example, consider an election which happens say every year to elect a president of an organization. A powerful person in this organization may have access to the votes of every voter in all past elections. From this data, he/she can compute, for every voter $i$, the average change $c_i$ in the preference of voter $i$ across consecutive elections (measured as the average Kendall-Tau distance between consecutive election votes). Now, last year's preferences can act as a proxy for each voter's believed preference, so that voter $i$'s actual preference is within a ball of approximately $\delta_i = c_i$ from his last year's preference.  

The results obtained show that \DRSM is poly-time solvable for most voting rules for a single manipulator. However, even though this is the case, one must take into account the fact that with sufficiently large uncertainty in the non-manipulators' votes, a large number of input profiles maybe \NO-instances, thus foiling manipulation. The hardness of \DRWM for most voting rules also implies that if we consider a scenario in which the manipulators need to succeed in a threshold fraction of the profiles that are within the given distance limit, that problem would also be hard for most voting rules. We note that the complexity of \DRSM for more than one manipulator for the bucklin and simplified bucklin rules is open as yet.

% Both settings, strong and weak manipulation, under distance restrictions, assume significance in different contexts. The notion of strong manipulation guarantees success, but is demanding of the manipulator. Weak manipulation investigates the possibility of success.\longversion{Both approaches do have limitations in certain settings. Weak manipulation would report any profile as manipulable when the distance parameter is ${m \choose 2}$ for each vote. Chances of strong manipulation may be less for small elections with high uncertainty parameters, even if the problem is poly-time solvable.}
 As future work, one may examine a variant of the problem in which we require the manipulators to succeed in at least a threshold fraction of the profiles within the distance requirements, and investigate its complexity for voting rules like plurality - for which both weak and strong manipulation problems are poly-time solvable, under distance restrictions. Yet another direction would be to examine the fixed parameter tractability of the considered problems, with respect to various election parameters.

\longversion{In the case of scoring rules, we observe that hardness of manipulation for more than one manipulator implies hardness of \DRSM as well. It would be interesting to investigate if \DRSM remains polynomial-time solvable for the case of more than one manipulator, for scoring rules for which the manipulation problem is poly-time solvable for more than one manipulator(k-approval is one such example for which this is the case, as shown).}

\bibliography{references}

\newcommand{\etalchar}[1]{$^{#1}$}
\begin{thebibliography}{DKN{\etalchar{+}}14}

\bibitem[BO91]{bartholdi1991single}
J.J. Bartholdi and J.B. Orlin.
\newblock Single transferable vote resists strategic voting.
\newblock {\em Soc. Choice Welf.}, 8(4):341--354, 1991.

\bibitem[BTT89]{bartholdi1989computational}
John~J Bartholdi, Craig~A Tovey, and Michael~A Trick.
\newblock The computational difficulty of manipulating an election.
\newblock {\em Social Choice and Welfare}, 6(3):227--241, 1989.

\bibitem[CSL07]{ConitzerSL07}
Vincent Conitzer, Tuomas Sandholm, and J{\'{e}}r{\^{o}}me Lang.
\newblock When are elections with few candidates hard to manipulate?
\newblock {\em J. {ACM}}, 54(3):14, 2007.

\bibitem[CW16]{ConitzerW16}
Vincent Conitzer and Toby Walsh.
\newblock Barriers to manipulation in voting.
\newblock In {\em Handbook of Computational Social Choice}, pages 127--145.
  Cambridge University Press, 2016.

\bibitem[CWX11]{ConitzerWX11}
Vincent Conitzer, Toby Walsh, and Lirong Xia.
\newblock Dominating manipulations in voting with partial information.
\newblock In {\em Proc. Twenty-Fifth {AAAI} Conference on Artificial
  Intelligence, {AAAI}}, 2011.

\bibitem[Dey19]{Dey19}
Palash Dey.
\newblock Local distance restricted bribery in voting.
\newblock In {\em Proc. 18th International Conference on Autonomous Agents and
  MultiAgent Systems, {AAMAS}}, pages 1925--1927, 2019.

\bibitem[DKN{\etalchar{+}}14]{davies2014complexity}
Jessica Davies, George Katsirelos, Nina Narodytska, Toby Walsh, and Lirong Xia.
\newblock Complexity of and algorithms for the manipulation of borda, nanson's
  and baldwin's voting rules.
\newblock {\em Artificial Intelligence}, 217:20--42, 2014.

\bibitem[DMN18]{DeyMN18}
Palash Dey, Neeldhara Misra, and Y.~Narahari.
\newblock Complexity of manipulation with partial information in voting.
\newblock {\em Theor. Comput. Sci.}, 726:78--99, 2018.

\bibitem[EOPR16]{EndrissOPR16}
Ulle Endriss, Svetlana Obraztsova, Maria Polukarov, and Jeffrey~S. Rosenschein.
\newblock Strategic voting with incomplete information.
\newblock In {\em Proc. the Twenty-Fifth International Joint Conference on
  Artificial Intelligence, {IJCAI} 2016, New York, NY, USA, 9-15 July 2016},
  pages 236--242, 2016.

\bibitem[FHS08]{FaliszewskiHS08}
Piotr Faliszewski, Edith Hemaspaandra, and Henning Schnoor.
\newblock Copeland voting: ties matter.
\newblock In {\em Proc. 7th International Joint Conference on Autonomous Agents
  and Multiagent Systems {AAMAS}}, pages 983--990, 2008.

\bibitem[FHS10]{FaliszewskiHS10}
Piotr Faliszewski, Edith Hemaspaandra, and Henning Schnoor.
\newblock Manipulation of copeland elections.
\newblock In {\em Proc. 9th International Conference on Autonomous Agents and
  Multiagent Systems {AAMAS}}, pages 367--374, 2010.

\bibitem[FKKN11]{FriedgutKKN11}
Ehud Friedgut, Gil Kalai, Nathan Keller, and Noam Nisan.
\newblock A quantitative version of the gibbard-satterthwaite theorem for three
  alternatives.
\newblock {\em {SIAM} J. Comput.}, 40(3):934--952, 2011.

\bibitem[FP10]{FaliszewskiP10}
Piotr Faliszewski and Ariel~D. Procaccia.
\newblock Ai's war on manipulation: Are we winning?
\newblock {\em {AI} Magazine}, 31(4):53--64, 2010.

\bibitem[FR16]{FaliszewskiR16}
Piotr Faliszewski and J{\"{o}}rg Rothe.
\newblock Control and bribery in voting.
\newblock In {\em Handbook of Computational Social Choice}, pages 146--168.
  Cambridge University Press, 2016.

\bibitem[Gib73]{gibbard1973manipulation}
A.~Gibbard.
\newblock Manipulation of voting schemes: a general result.
\newblock {\em Econometrica}, pages 587--601, 1973.

\bibitem[LMOP19]{LevMOP19}
Omer Lev, Reshef Meir, Svetlana Obraztsova, and Maria Polukarov.
\newblock Heuristic voting as ordinal dominance strategies.
\newblock In {\em Proc. The Thirty-Third {AAAI} Conference on Artificial
  Intelligence, {AAAI} 2019, The Thirty-First Innovative Applications of
  Artificial Intelligence Conference, {IAAI} 2019, The Ninth {AAAI} Symposium
  on Educational Advances in Artificial Intelligence, {EAAI} 2019, Honolulu,
  Hawaii, USA, January 27 - February 1, 2019}, pages 2077--2084, 2019.

\bibitem[Mei15]{Meir15}
Reshef Meir.
\newblock Plurality voting under uncertainty.
\newblock In Blai Bonet and Sven Koenig, editors, {\em Proceedings of the
  Twenty-Ninth {AAAI} Conference on Artificial Intelligence, January 25-30,
  2015, Austin, Texas, {USA}}, pages 2103--2109. {AAAI} Press, 2015.

\bibitem[MLR14]{MeirLR14}
Reshef Meir, Omer Lev, and Jeffrey~S. Rosenschein.
\newblock A local-dominance theory of voting equilibria.
\newblock In Moshe Babaioff, Vincent Conitzer, and David~A. Easley, editors,
  {\em {ACM} Conference on Economics and Computation, {EC} '14, Stanford , CA,
  USA, June 8-12, 2014}, pages 313--330. {ACM}, 2014.

\bibitem[MR15]{MosselR15}
Elchanan Mossel and Mikl{\'{o}}s~Z. R{\'{a}}cz.
\newblock A quantitative gibbard-satterthwaite theorem without neutrality.
\newblock {\em Combinatorica}, 35(3):317--387, 2015.

\bibitem[RE12]{ReijngoudE12}
Annemieke Reijngoud and Ulle Endriss.
\newblock Voter response to iterated poll information.
\newblock In Wiebe van~der Hoek, Lin Padgham, Vincent Conitzer, and Michael
  Winikoff, editors, {\em International Conference on Autonomous Agents and
  Multiagent Systems, {AAMAS} 2012, Valencia, Spain, June 4-8, 2012 {(3}
  Volumes)}, pages 635--644. {IFAAMAS}, 2012.

\bibitem[Sat75]{satterthwaite1975strategy}
M.A. Satterthwaite.
\newblock Strategy-proofness and {Arrow's} conditions: Existence and
  correspondence theorems for voting procedures and social welfare functions.
\newblock {\em J. Econ. Theory}, 10(2):187--217, 1975.

\bibitem[SW14]{SlinkoW14}
Arkadii Slinko and Shaun White.
\newblock Is it ever safe to vote strategically?
\newblock {\em Social Choice and Welfare}, 43(2):403--427, 2014.

\bibitem[YHL04]{yu2004minimizing}
Wenci Yu, Han Hoogeveen, and Jan~Karel Lenstra.
\newblock Minimizing makespan in a two-machine flow shop with delays and
  unit-time operations is np-hard.
\newblock {\em Journal of Scheduling}, 7(5):333--348, 2004.

\end{thebibliography}

% \longversion{\input{appendix}}

\longversion{\bibliographystyle{alpha}}
%  \shortversion{\bibliographystyle{splncs04}}

\end{document}